\definecolor{myred}{rgb}{0.6,0,0}
\def\init{\setcounter{equation}{0}}
\newtheorem{theoreme}{Theorem}
\newtheorem{prop}[theoreme]{Proposition}
\newtheorem{rem}[theoreme]{Remark}
\newcommand{\bes}{\begin{subequations}}
\newcommand{\ees}{\end{subequations}}
\def\12{\frac{1}{2}}
\def\sss{{\mathbb S}}
\def\hh{{\mathbb H}}
\def\rr{{\mathbb R}}
\def\zz{{\mathbb Z}}
\def\cc{{\mathbb C}}
\def\nn{{\mathbb N}}
\newcommand\sd{\mathrm{sd}}
\newcommand\F{\mathrm{F}}
\newcommand\Feyn{\mathrm{F}}
\newcommand\aFeyn{\bar{\mathrm{F}}}
\newcommand\PJ{\mathrm{PJ}}
\newcommand\p{{\rm \partial}}
\newcommand\dS{\mathrm{dS}}
\newcommand\loc{{\rm loc}}
\renewcommand\Im{{\rm Im}}
\renewcommand\Re{{\rm Re}}
\newcommand\sing{{\rm sing}}
\newcommand\reg{{\rm reg}}
\newcommand\bep{\begin{proposition}}
\newcommand\eep{\end{proposition}}
\newcommand\ber{\begin{rem}}
\newcommand\eer{\end{rem}}
\newcommand\proof{\noindent {\bf Proof.}\ \ }
\renewcommand\i{{\rm i}}
\newcommand\x{{x}}
\newcommand{\beq}{\begin{equation}}
\newcommand{\eeq}{\end{equation}}
\newcommand{\bear}[1]{\begin{array}{#1}}
\newcommand{\ear}{\end{array}}
\def\otimesal{\mathop{\hbox{\raise 1.5 ex
  \hbox{$\scriptscriptstyle\rm al$}
\kern -0.92 em \hbox{$\otimes$}}}}
\def\oplusal{\mathop{\hbox{\raise 1.5 ex
  \hbox{$\scriptscriptstyle\rm al$}
\kern -0.92 em \hbox{$\oplus$}}}}
\def\Gammal{\hbox{\raise 1.68 ex 
\hbox{$\scriptscriptstyle\rm al$}\kern -0.50 em $\Gamma$}}
\newcommand\sgn{{\rm sgn}}
\newcommand\qed{$\Box$\medskip}
\newcommand\bel{\begin{lemma}}
\newcommand\eel{\end{lemma}}
\newcommand\bet{\begin{theoreme}}
\newcommand\eet{\end{theoreme}}
\newcommand\bex{\begin{example}}
\newcommand\eex{\end{example}}
\newcommand\bed{\begin{definition}}
\newcommand\eed{\end{definition}}
\newcommand\bea{\begin{assumption}}
\newcommand\eea{\end{assumption}}
\newcommand\bec{\begin{corollary}}
\newcommand\eec{\end{corollary}}
\renewcommand\bar{\overline}
\newcommand\supp{{\rm supp}}
\newcommand\e{{\rm e}}
\renewcommand\d{{\rm d}}
\newcommand\cS{{\mathcal S}}
\title{Bessel potentials and Green functions\\on pseudo-Euclidean spaces}
\author{ Jan Derezi\'{n}ski\thanks{ Supported by National Science Center (Poland) under the Grant
UMO-2019/35/B/ST1/01651.} \ and Bart\l{}omiej Sikorski\\  Department of Mathematical Methods in Physics, Faculty of Physics,\\
 University of Warsaw, Pasteura 5, 02-093 Warszawa, Poland\\ email: jan.derezinski@fuw.edu.pl, \	bartlomiej.sikorski@fuw.edu.pl
 }
\begin{document}
\maketitle
\begin{abstract}
   We review properties of Bessel potentials, that is,
inverse  Fourier transforms of (regularizations of)
  $(m^2+p^2)^{-\frac\mu2}$ on a pseudoEuclidean space with
signature $(q,d-q)$. We are mostly interested in  the Lorentzian
signature $(1,d-1)$, and the
case $\mu=2$, related to the Klein-Gordon equation $(-\Box+m^2)f=0$.
We analyze properties
of various ``propagators'', which play an
important role in Quantum Field Theory, such as the retarded/advanced
propagators or Feynman/antiFeynman propagators.
We consistently use  hypergeometric functions instead
of Bessel functions, which makes most formulas much more
transparent. We pay attention to distributional properties of
various Bessel potentials. We include in our analysis the ``tachyonic
case'', corresponding to the ``wrong'' sign in the Klein-Gordon equation.
\end{abstract}

\noindent
{\bf Keywords:} Bessel potential, Riesz potential, Klein-Gordon equation, Minkowski space.

\section{Introduction}
Let us start with the Bessel potentials on the Euclidean space $\rr^d$. 
Let $\Re\mu>0$ and $m\geq0$.
If $m=0$ we will usually additionally assume that $d>\Re\mu$.
Consider the
function
\beq G_{\mu,m}(x)=\int\frac{\e^{\i px}}{(m^2+p^2)^\frac{\mu}{2}}\frac{\d 
  p}{(2\pi)^d}\label{integ}\eeq
on the Euclidean space $\rr^d$.
{    Note that that $G_{\mu,m}(x-y)$ can be interpreted as the integral kernel of 
  the operator $(m^2-\Delta)^{-\frac\mu2}$. }

We have
\beq\label{eq:mass_dependence} G_{\mu,m}(x)= m^{d-\mu}G_{\mu,1}(mx),\eeq
so the case $m>0$ reduces to $m=1$. $G_{\mu,1}(x)$ can be expressed in terms of
the {\em Macdonald function}, one of solutions of the {\em modified
  Bessel equation}. Therefore,  $G_{\mu,1}(x)$
is often called the
{\em Bessel potential} of order $\mu$. 
The function 
$G_{\mu,0}(x)$ is called the {\em Riesz potential} of order $\mu$.

{    It is remarkable that the theory of Bessel potentials is very 
  similar for all $\mu>0$.}
However, the case $\mu=2$ is probably the most important.
In this case we will usually omit $\mu$ from the notation, setting $G_m(x):=G_{2,m}(x)$, and
obtaining the {\em Green function} of the
inhomogeneous {\em Helmholtz equation}
\beq(-\Delta+m^2) g(x)=f(x).\label{helmh}\eeq 
In other words,
\beq(-\Delta+m^2) G_{m}(x)=\delta(x),\eeq

Note that in
dimension $d=3$ we have $G_m(x)=\frac{\e^{-m|x|}}{4\pi|x|}$. Thus
for $m>0$ it coincides with the {\em Yukawa potential} and for $m=0$
with the {\em Coulomb potential}.

Suppose now $\rr^{q,d-q}$ is the {\em pseudo-Euclidean space of
  signature $(q,d-q)$}. In other words, as a set it is $\rr^d$ 
with the scalar product
for $x,y\in\rr^{q,p}$ given by
\beq xy=-x_1y_1\cdots-x_qy_q+x_{q+1}y_{q+1}+\cdots x_{d}y_{d}.\eeq
The definition \eqref{integ} is usually no longer correct {    for $m^2\in\rr$, since
$\frac1{(m^2+p^2)^\frac{\mu}{2}}$ may fail to be locally integrable, and
hence may not define a tempered distribution. It still works for
complex non-real $m^2$.  A possible pair of
generalizations of \eqref{integ} to $m^2$ real is the pair of
functions, which correspond to the limits from above and below:}
\begin{align} G_{\mu,m}^\F(x)=\int\frac{\e^{\i px}}{(m^2+p^2-\i0)^\frac{\mu}{2}}\frac{\d 
  p}{(2\pi)^d},\label{integ1}\\
 G_{\mu,m}^{\bar\F}(x)=\int\frac{\e^{\i px}}{(m^2+p^2+\i0)^\frac{\mu}{2}}\frac{\d 
  p}{(2\pi)^d}.\label{integ2}\end{align}


{    \eqref{integ1} and \eqref{integ2} have an obvious interpretation as 
 boundary values of integral kernels of appropriate functions of the 
 pseudoLaplacian 
 \beq\Box:=-\partial_1^2\cdots-\partial_q^2+\partial_{q+1}^2+\partial_d^2.\eeq}
Again, the case $m>0$ reduces to $m=1$.
 $G_{\mu,m}^{\F/\bar\F}(x)$ can be expressed by Macdonald and Hankel 
 functions. (The Hankel functions are special functions solving 
 the standard Bessel equation.)

 The symbols $\F$ and $\bar\F$ are motivated
 by the special case of Green functions in the Lorentzian case.
 $G_{2,m}^{\F/\bar\F}(x)$ coincide then with 
 the {\em  Feynman}, resp. the {\em anti-Feynman  propagators},
 which play an important role in   Quantum
 Field Theory, as we explain below.

In our paper we will discuss all signatures, including the {\em Euclidean} $(0,d)$ and
{\em anti-Euclidean} $(d,0)$. However, we are mostly interested in the
{\em Lorentzian
signature}. The Lorentzian signature comes in two varieties: ``mostly
pluses'' $(1,d-1)$ and ``mostly minuses'' $(d-1,1)$. We will treat the
former as the standard one.


The Lorentzian case is especially interesting and rich.
This is related to the fact that the  {\em Minkowski space }
$\rr^{1,d-1}$ can be equipped with a causal structure
and the set
$p^2+m^2=0$ has two connected components. Therefore,
{    besides   $G_{\mu,m}^{\F/\bar\F}$, we can introduce}
the distributions 
\begin{align} G_{\mu,m}^\vee(x)=\int\frac{\e^{\i
  px}}{(m^2+p^2-\i0\sgn p^0)^\frac{\mu}{2}}\frac{\d 
  p}{(2\pi)^d},\label{integ1a}\\
 G_{\mu,m}^\wedge(x)=\int\frac{\e^{\i px}}{(m^2+p^2+\i0\sgn p^0)^\frac{\mu}{2}}\frac{\d 
  p}{(2\pi)^d},\label{integ2a}\end{align}
which are invariant wrt orthochronous Lorentz
transformations. Remarkably,  $G_{\mu,m}^{\vee/\wedge}$ is supported
in the forward, resp. backward cone. Therefore,  $G_{\mu,m}^\vee$ is called
the {\em forward (or retarded)}, and  $G_{\mu,m}^\wedge$ the {\em backward (or advanced)  Bessel potential}.

In the Lorentzian case, the  pseudo-Laplacian is usually called the
\emph{d'Alembertian}
\beq 
\Box:=-\partial_0^2+\partial_{1}^2+\cdots+\partial_{d-1}^2,\eeq
and $-\Box+m^2$ is called the {\em Klein-Gordon operator}.
By a  {\em  Green function} of the (inhomogeneous) {\em Klein-Gordon  equation }
\beq(-\Box+m^2) f(x)=g(x).\label{helmh1}\eeq 
we will mean a distribution $G^\bullet(x)$ satisfying 
\beq(-\Box+m^2) G^\bullet(x)=\delta(x).\label{helmh10}\eeq 

The Klein-Gordon equation possesses many Green functions.
Among them, we have the Feynman and antiFeynman Green functions given
by the formulas \eqref{integ1} and \eqref{integ2} with $\mu=2$.
{    Another distinguished pair consists of the retarded (or forward)
Green function and the advanced (or backward) Green function, defined by demanding that their support
is contained in the forward, resp. backward cone.
For $m^2\geq0$ the retarded Green function is given by
\eqref{integ1a} and the advanced Green function by \eqref{integ2a} with $\mu=2$.}

The Feynman, anti-Feynman,
forward, and backward Green functions of the Klein-Gordon equation have important
applications in physics, especially in classical and quantum field
theory. The forward and backward Green functions can be used to
express the Cauchy problem. The Feynman, resp. anti-Feynman Green
functions express the time-ordered, resp. anti-time-ordered vacuum expectation
values of fields in quantum field theory. Importantly, they satisfy the identity
\beq
G_{m}^\F+ G_{m}^{\bar\F}=
G_{m}^{\vee}+ G_{m}^{\wedge}.
\label{spec}\eeq

In our paper, we also consider the Lorentzian case with the ``wrong
sign of $m^2$''. This case 
corresponds to the {\em tachyonic} Klein-Gordon  equation
\beq(-\Box-m^2) f(x)=g(x).\label{helmh2}\eeq 
Remarkably, all four basic Green functions, Feynman $G_m^\F$,
anti-Feynman $G_m^{\bar\F}$,
forward $G_m^\vee$, and backward $G_m^\wedge$,
can be defined in the tachyonic case.
For the Feynman and anti-Feynman Green functions we can still use the formulas
\eqref{integ1} and \eqref{integ2}, where $m^2$ is replaced with
$-m^2$. Their interpretation in terms of the 
vacuum expectation values is however lost, since the tachyonic theory has no
vacuum {    state. (In particular, in the tachyonic case we do not
  have a counterpart of the positive/negative frequency Green
  functions (\ref{pro4})). The forward and backward Green functions
are defined by their support properties. For them} we cannot use the formulas
\eqref{integ1a} and \eqref{integ2a}. In fact, the set $p^2-m^2=0$ is
now connected, and cutting it with $\sgn p^0$ is no longer
invariant. Nevertheless, one can use the
analytic continuation in $m$ to uniquely define Green functions with correct
support properties also in the tachyonic case. 
{   We point out that the
identity \eqref{spec} is no longer true in the tachyonic case. 
}

The difference of two Green functions is a solution of the homogeneous Helmholtz/Klein-Gordon
equation. Certain distinguished solutions are important for
physics applications.  In the Lorentzian case, we have the
Pauli-Jordan propagator; {    for
$m^2\geq0$ also} the positive
frequency and the negative frequency two-point functions.
We illustrate applications of distinguished solutions 
to the Helmholtz/Klein-Gordon equation by computing
averages of plane waves over the sphere (in the
Euclidean case), as well as over the hyperbolic and de Sitter space
(in the Lorentzian case).

Let us say a few words about the history of Bessel potentials.
The name {\em  Bessel potentials} was introduced in the 60s by
Aronszajn and Smith, who studied them in the Euclidean case in \cite{Aronszajn1}. Around the
same time, they were also investigated by Calderon \cite{Calderon}.
  Bessel potentials are frequently viewed in the literature as
  smoothed versions of Riesz potentials (see, for example,
  \cite{Stein} where they are defined using the integral formula
  \eqref{plo1}). {  They are often used to define Bessel potential spaces that generalize standard Sobolev spaces (see \cite{AdamsHedberg}), and the idea to use Bessel kernels is due to Deny \cite{Deny}.} For a comprehensive treatment of (Euclidean) Bessel potentials, we refer the reader to \cite{Aronszajn1}, where many properties of Bessel potentials are exhaustively studied. 

The Lorentzian versions of Bessel potentials, typically in dimension
1+3, often
appear in the literature on Quantum Field Theory.
{    They are ingredients of formulas for scattering amplitudes based on
 Feynman diagrams and on the Epstein-Glaser approach
\cite{Steinmann,microlocal}.} The famous textbooks by
  Björken-Drell    \cite{bjorken} and by Bogoliubov-- Shirkov 
  \cite{bogoliubov} contain  appendices devoted to distinguished
  Green functions and solutions of the Klein-Gordon equation in the
  physical dimension 1+3. They carry various names. For instance,
  often the term {\em Green function} is replaced by {\em propagator}, etc.

{  
Formulas for Bessel potentials in various signatures are known and are available in
collections of integrals such as \cite{BP} and \cite{GR}.  In chapter
III.2 of \cite{GelfandShilov} one can find Fourier transforms of
powers of quadratic forms with any signature, including the formula
(\ref{qre2}) of the general case studied in this paper.
} Although there
exists a large literature about Bessel potentials, our presentation
contains several new points, which we have not seen in the literature
and believe are important.

The first new point involves the special functions that we use.
Various kinds of the Bessel equation can be reduced to equation
\begin{eqnarray}
(z\p_z^2+(\alpha +1)\p_z-1)v(z)=0,
\label{equa1}\end{eqnarray}
 which can be called the
${}_0F_1$ {\em  hypergeometric equation}.
Equation \eqref{equa1} has two singular points: $0$ and $\infty$.
The singularity at $0$ is regular (Fuchsian), and the solution
obtained by the well-known
Frobenius method is the
 ${}_0F_1$ {\em hypergeometric function}, which we denote
 $F_\alpha$. We usually prefer its {\em Olver normalized} version ${\bf 
   F}_\alpha:=\frac{F_\alpha}{\Gamma(\alpha+1)}$, closely related to
 the Bessel function, both standard and modified.

 Another standard
 solution of the ${}_0F_1$   equation, corresponding to the
 irregular singularity at $\infty$, is
 the function that we denote $U_\alpha$,
 This function is perhaps less known. Up to a coefficient, it coincides
 with the {\em Meijer $G$-function} $ G_{0,2}^{2,0}(-;0,-\alpha;z)$.
The function $U_\alpha$ is closely related to the Macdonald
and Hankel functions.

{    In our paper, we treat ${\bf F}_\alpha$ and $U_\alpha$ functions as basic  
elements of our description of Bessel potentials.
In our opinion, they are much more convenient
 for this purpose, rather than functions from the
  Bessel family, as it is done
 in the conventional treatment of this topic.
The corresponding formulas
are simpler and more transparent. This is especially visible when we consider
non-Euclidean signatures, where the formulas involve analytic
continuation across two branches and an irregular distribution at the
junction of these branches. The ${\bf F}_\alpha$ and
$U_\alpha$ functions are also convenient to see the transition from
the Minkowski space to the deSitter and the universal cover of the
AntideSitter space, as discussed in \cite{DeGa}.
 In fact, on the Minkowski space
  retarded/advanced and Feynman/anti-Feynman Bessel potentials
 are expressed in terms of ${\bf F}_\alpha$ and
$U_\alpha$, and on the deSitter and Anti-deSitter space we need closely
related {\em Gegenbauer functions} instead.}

We  also believe that there are some important novel features in our presentation of the
Lorentzian case, which is tailored to the needs of Quantum Field
Theory.
 In our opinion, it is quite remarkable how rich is the theory of Bessel potentials in the Lorentzian signature.
We have four distinct Lorentz invariant Green functions of the
Klein-Gordon equation, with important applications in physics. If we
include also a few useful distinguished solutions to the Klein-Gordon
equation (such as the Pauli-Jordan propagator, positive and negative
frequency solution), then we obtain a whole menagerie of functions.

In
our discussion we cover not only the massive and massless case, but
also the tachyonic case. This case is quite curious, even though usually
ignored in the physics literature. {    We also discuss identity
  \eqref{spec}, true for $m^2\geq0$, but wrong in the tachyonic
  case. Remarkably, this identity sometimes, but not always,
  generalizes to  curved spacetimes, as analyzed recently in \cite{DeGa}.} 

In our treatment, we pay special attention to the
distributional character of Bessel potentials. This is unproblematic
in the Euclidean signature, where Bessel potentials are given by
(locally) integrable functions. This is not the case in non-Euclidean signatures.
In particular, it is interesting to look at the functions ${\bf
  F}_\alpha$ and $U_\alpha$ as defining distributions on the real
line. 
With this interpretation in mind, well-known identities have to be
reformulated, see e.g.
\eqref{conne2a}.

{     Finally, let us mention that there exist a large literature about
Green functions of the Klein-Gordon equation on curved spacetimes. In
the generic context their explicit expression is not possible,
and often instead of exact Green functions one restricts oneself to
{\em parametrices}, that is inverses modulo smoothing terms. The existence
of exactly four parametrices that generalize
$G^{\F/\bar\F}$ and $G^{\vee/\wedge}$ is the result of
a famous paper by Duistermaat and H\"ormander \cite{HormDuist}. It is also
remarkable that expansions similar to
(\ref{qqq1})-(\ref{G_KG_insidecone-}) describe singular parts of these
parametrices also in curved spacetimes, where they can be derived from
the Hadamard recursion relations (see Chapter 4 of \cite{Friedlander}
or Chapter 2 of \cite{WaveEquation}.)  The universality of these
singular  parts is an important idea in
 Quantum Field Theory on curved spacetimes \cite{microlocal}}.

\section{Special functions related to the ${}_0F_1$  equation}
       \init 
\subsection{The ${}_0F_1$  equation}
\label{sa5}
{   
  Our presentation of Bessel potentials will use extensively
  ${}_0F_1$ hypergeometric functions, closely related to functions
  from the Bessel family. Surprisingly, they are seldom
  used and 
   discussed in the literature. Therefore, we devote this
  section
to a concise
   exposition of their properties,    mostly following \cite{De1} and
    \cite{De2}. In particular, we will treat these functions
   as distributions on the real line, as explained in section
   \ref{specialasdistr}, which leads to useful distributional identities 
which we have not seen in the literature.}

Let $c\in\cc$.
The {\em ${}_0F_1$ equation} is
\beq
(z\p_z^2+c\p_z-1)v(z)=0.\label{equa}\eeq
If $c\neq0,-1,-2,\dots$, then the only solution of the ${}_0F_1$ equation equal to $1$ at
$z = 0$
is called
the {\em ${}_0F_1$ hypergeometric function}: 
\[F(c;z):=\sum_{j=0}^\infty
\frac{1}{
(c)_j}\frac{z^j}{j!},\]
{   where $(c)_j$ denotes the Pochhammer symbol:
\begin{align*}
    &(a)_0=1,\\
&(a)_n:=a(a+1)\dots(a+n-1),&&n=1,2,\dots\\
&(a)_n:=\frac{1}{(a-n)\dots(a-1)},&& n=\dots,-2,-1.
\end{align*}}
 $F(c;z)$ is defined for $c\neq0,-1,-2,\dots$.
Sometimes it is more convenient to consider
the function
\[ {\bf F}  (c;z):=\frac{F(c;z)}{\Gamma(c)}=
\sum_{j=0}^\infty
\frac{1}{
\Gamma(c+j)}\frac{z^j}{j!}\]
defined for all $c$.
For all parameters, we have an integral representation called the {\em 
  Schl\"afli formula}: 
\begin{eqnarray*}
\frac{1}{2\pi \i}\int\limits_{]-\infty,0^+,-\infty[}
\e^t\e^{\frac{z}{t}}t^{-c}\d t 
&=& {\bf F}  (c,z),\ \ \ \ \Re z>0,\end{eqnarray*}
where the contour $]-\infty,0^+,-\infty[$ starts at $-\infty$, goes
around $0$ counterclockwise and returns to $-\infty$.

Instead of $c$ it is often more natural to use $\alpha:=c-1$. Thus, we denote
\beq
F_\alpha (z):=F(\alpha+1;z),\quad
 {\bf F}  _\alpha (z):= {\bf F} (\alpha+1;z).
\eeq


The following function is
 also a solution of the ${}_0F_1$ equation \eqref{equa1}:
\begin{eqnarray*}
U_\alpha (z)&:=&\e^{-2\sqrt z} z^{-\frac{\alpha}{2} -\frac14}
{}_2 F_0\Big(\frac12+\alpha,\frac12-\alpha;-;-\frac{1}{4\sqrt z}\Big),
\end{eqnarray*}
where we used the ${}_2F_0$ function, see e.g. \cite{De1,De2}.  $U_\alpha$ is a
multivalued function. When talking about multivalued functions, we will usually consider their {\em principal
  branches}  on the domain $\cc\backslash]-\infty,0]$.

  The function $U_\alpha$ rarely appears in the literature, except as a
  special case of Meijer's function, see \eqref{meijer} below.
  Typically, it is represented through Macdonald or Hankel functions,
  which we describe further in equations \eqref{MacDonald},
  \eqref{Hankel-U}, and \eqref{Hankel+U}. In our opinion, however, the
  function $U_\alpha$ is often more convenient than Macdonald or Hankel functions.

$U_\alpha (z)$ has a symmetry
\begin{eqnarray}\label{U-a_ident}
U_\alpha (z)=z^{-\alpha }U_{-\alpha }(z).
\end{eqnarray}

Alternatively, the function $U_\alpha$ can be defined by the integral
representations valid for all $\alpha$:
\begin{eqnarray}
\frac{1}{\sqrt\pi}\int_0^{\infty}\e^{-t}\e^{-\frac{z}{t}}t^{-\alpha -1}\d t
&=&U_\alpha (z),\ \ \ \ \Re z>0.\label{integU}\end{eqnarray}
For further reference, it is convenient to rewrite
\eqref{integU} as follows:
 For $\Re(m)>0$, we have
\begin{align}
  \int_0^\infty\e^{-tm^2-\frac{x^2}{4t}}t^{-\alpha-1}\d 
  t&=\sqrt\pi m^{2\alpha}U_\alpha\Big(\frac{m^2x^2}{4}\Big)\label{plo1}.
\end{align}
For $\Re(m)\geq0$ \eqref{plo1} is still true in the sense of
oscillatory integrals.
By substituting $x^2\mapsto \e^{\pm\i\frac{\pi}{2}}x^2 , m^2\mapsto \e^{\pm\i\frac{\pi}{2}} m^2,$ into \eqref{plo1} we obtain a pair of identities
valid in terms of  oscillatory integrals for $m>0$:
\begin{align}
  \int_0^\infty\e^{\mp \i tm^2\mp\frac{x^2}{4t}}t^{-\alpha-1}\d 
  t&=\e^{\i\frac{\pi\alpha}{2}}\sqrt\pi m^{2\alpha}U_\alpha\Big(\e^{\pm\i\pi}\frac{m^2x^2}{4}\Big)\label{plo11}.
\end{align}

As $|z|\to\infty$ and 
$    |\arg z|<2\pi-\epsilon$, $\epsilon>0$, we have 
 \beq
U_\alpha (z)\sim{\rm exp}(- 2z^\12) z^{-\frac{\alpha }2-\frac14}.
\label{saddle1}\eeq
$U_\alpha$  is the unique solution of  \eqref{equa1}  with this property.
{    (Note that the validity of \eqref{saddle1} extends beyond
  $|\arg z|<\pi$, that is, beyond the
  principal sheet of the Riemann surface.)}

We can express $U_\alpha$ in terms of the solutions of
with a simple behavior at zero 
\begin{eqnarray}\label{conni}
U_\alpha (z)
&=&\frac{\sqrt\pi}{\sin\pi (-\alpha )} {\bf F}  _\alpha (z)
+\frac{\sqrt \pi}{\sin\pi \alpha }
z^{-\alpha } {\bf F}  _{-\alpha }(z).
\end{eqnarray}
Alternatively, we can use the $U_\alpha$ function and its analytic
continuation around $0$ in the clockwise or anti-clockwise direction
as the basis of solutions:
\begin{equation}
 {\bf F}_\alpha(z)=\frac{\mp\i}{ 2\sqrt\pi}\left(\e^{\mp\i\pi 
  \alpha}
 U _\alpha(z)-
\e^{\pm\i\pi \alpha} U _\alpha(\e^{\pm\i 2\pi}z)\right).\label{conne}
  \end{equation}
 Here is a version of \eqref{conne} adapted to some applications:
  \begin{align}
{\bf F}_\alpha(-z)&=\frac{\i}{ 2\sqrt\pi}\left(\e^{\i\pi 
  \alpha}
 U _\alpha(\e^{\i\pi}z)-
\e^{-\i\pi \alpha} U _\alpha(\e^{-\i
                                 \pi}z)\right),\label{conne1}\\
z^{-\alpha}     {\bf F}_{-\alpha}(-z)&=\frac{\i}{ 2\sqrt\pi}\left(
 U _{\alpha}(\e^{\i\pi}z)-
U _{\alpha}(\e^{-\i \pi}z)\right).\label{conne2}
  \end{align}

We have the recurrence relations
 \begin{align}
 \p_z {\bf F}  _\alpha (z)&= {\bf F}  _{\alpha +1}(z), 
 \\
 \left(z\p_z+\alpha\right) {\bf F}  _\alpha (z)&= {\bf F}  _{\alpha
                                                 -1}(z);\\[2ex]
    \p_z U  _\alpha (z)&=-U _{\alpha +1}(z), \label{recur3}
 \\
 \left(z\p_z+\alpha\right) U  _\alpha (z)&= -U _{\alpha -1}(z).
\end{align}

$\alpha=m\in\zz$ is the degenerate case of the ${}_0F_1$ equation at $0$.
We have then
\[{\bf F}_m
(z)=\sum_{n=\max(0,-m)}\frac{1}{n!(m+n)!}z^n.\]
This easily implies the identity
\begin{eqnarray}
{\bf F}_m(z)&=&z^{-m}
                {\bf F}_{-m}(z).\label{identi0}
\end{eqnarray}
      In the degenerate case $U_\alpha (z)$ needs to be reexpressed
      using the de l'Hospital formula:
\begin{eqnarray}\label{U_integer}
U_m (z)
&=&\frac{(-1)^{m+1}}{\sqrt\pi}\biggl(\sum_{k=1}^{m}\frac{(-1)^{k-1}(k-1)!}{(m-k)!}z^{-k}\\&&+\sum_{j=0}^\infty
                                                                                             \frac{\ln
                                                                                             (z)-\psi(j+m+1)-\psi(j+1)}{j!(m+j)!}z^j
                                                                                             \biggr).
                                                                                             \notag\end{eqnarray}
In the degenerate case, the integral representation simplifies
yielding the so-called {\em Bessel integral representation}.
Besides, we have a generating function:
\begin{eqnarray*}
\frac{1}{2\pi \i}\int\limits_{[0^+]}
\e^{t+\frac{z}t}t^{-m-1}\d t&= &{\bf F}  _m(z)=z^{-m} {\bf F} _{-m}(z),\\
\e^{t}\e^{\frac{z}t}&
=&\sum_{m\in\zz}t^m {\bf F}  _m(z).
\end{eqnarray*} Above,
$[0^+]$ denotes the contour encircling $0$ in the counterclockwise
direction.

In the half-integer case, we can express the ${}_0F_1$ function in terms of
elementary functions. Indeed,
\begin{eqnarray}
F_{-\12}(z)=\cosh2\sqrt z,&&U_{-\12}(z)=\exp(-2\sqrt z),\\
F_{\12}(z)=\frac{\sinh2\sqrt z}{2\sqrt z},&&U_{\12}(z)=\frac{\exp(-2\sqrt z)}{\sqrt z}, 
\end{eqnarray}
and by the recurrence relations, we have for $k\in\nn$
\begin{align}
F_{-\12-k}(z)=& z^{k+\frac{1}{2}} \partial_z^k\Big(
\frac{\cosh(2\sqrt z\big)}{\sqrt{z}}\Big),
\\F_{\12+k}(z)=&\partial_z^k\Big(\frac{\sinh(2\sqrt z)}{2\sqrt z}\Big),\\U_{-\12-k}(z)=&(-1)^kz^{k+\frac{1}{2}} \partial_z^k\Big(\frac{\exp(-2\sqrt z)}{\sqrt{z}}\Big),\\U_{\12+k}(z)=&(-1)^k\partial_z^k\Big(\frac{\exp(-2\sqrt z)}{\sqrt z}\Big).
\end{align}

\subsection{Relationship to confluent functions}

Recall that  the confluent equation is
\beq(w\partial_w^2+(c-w)\partial_w-c)f(w)=0.\eeq
Its standard solutions are
  \begin{align*}
\text{ Kummer's confluent function}\quad  {}_1F_1(a;c;w)&:=\sum_{n=0}^\infty\frac{(a)_n}{(c)_nn!}w^n,\\
\text{and Tricomi's confluent function}\quad
 U(a;c;w)&:=z^{-    a}{}_2F_0(a,1+a-c;-;-w^{-1}).\end{align*} 

The ${}_0F_1$ equation can be reduced to a special class of 
the confluent equation by the so-called {\em Kummer's 2nd transformation}:
\begin{align}
&z\p_z^2+(\alpha+1)\p_z-1\\
=&\frac{4}{w}\e^{-w/2}\Big(w\partial_w^2+(2\alpha+1-w)\partial_w-\alpha-\frac12\Big)\e^{w/2},
\label{gas}\end{align}
where $w=\pm 4\sqrt{z}$, $z=\frac{1}{16}w^2$.
$F_\alpha$ and $U_\alpha$ can be expressed in terms of Kummer's and
Tricomi's confluent function as follows:
\begin{align}F_\alpha(z)&=
\e^{\mp2\sqrt{z}}{}_1F_1\Big(\alpha+\frac12, 
                            2\alpha+1,\pm4\sqrt{z}\Big),\\
    U_\alpha (z)& =\frac{\e^{-2\sqrt z}}{2^{2\alpha+1}}U\Big(\alpha+\frac{1}{2},2\alpha+1,
  4\sqrt{z}\Big).
\end{align}

\subsection{Relationship to
  Meijer G-functions}

Solutions of hypergeometric equations ${}_pF_q$ can be expressed in
terms of Meijer $G$-functions \cite{Mellin-Transform Method}. In particular, the ${}_0F_1$  equation
can be solved by {  two distinguished functions}
\begin{align}
G_{0,2}^{1,0}\Big(\begin{matrix}\\0,-\alpha \end{matrix}\Big|-z\Big)
&:=  \frac{1}{2\pi \i}\int_{L_1} \frac{\Gamma(-s)\e^{\i\pi
                                                                        s}}{\Gamma(\alpha+1+s)}z^s\d s ,\\
    G_{0,2}^{2,0}\Big(\begin{matrix}\\0,-\alpha \end{matrix}\Big|z\Big) &:=\frac{1}{2\pi \i}\int_{L_2} \Gamma(-s)\Gamma(-\alpha-s)z^s\d s.
\end{align}
Here, the contour $L_1$ goes from $+\infty$ to $+\infty$ and encircles
$\nn_0$, and the contour $L_2$ also goes from $+\infty$ to $+\infty$ and encircles
 $\nn_0\cup(\nn_0-\alpha)$, {    both counterclockwise}.
Computing the residues and using the connection formula \eqref{conni}
we obtain
\begin{align}
    {\bf F}_\alpha  (z)&= G_{0,2}^{1,0}\Big(\begin{matrix}\\0,-\alpha \end{matrix}\Big|-z\Big),\\
    U_\alpha(z)&=\frac{1}{\sqrt{\pi}} G_{0,2}^{2,0}\Big(\begin{matrix}\\0,-\alpha \end{matrix}\Big|z\Big).\label{meijer}
\end{align}

\subsection{Relationship  to Bessel functions}

\label{bessel0}
In the literature, the 
 ${}_0F_1$  equation is seldom used. Much more frequent is
the {\em modified Bessel equation}, which  is equivalent to the
${}_0F_1$ equation. {    It is given by the operator}
\begin{eqnarray*}
z^{\frac{\alpha}{2}}\big(z\p_z^2+(\alpha +1)\p_z-1\big)
z^{-\frac{\alpha}{2}}&=&
\partial_w^2+\frac{1}{w}\partial_w-1-\frac{\alpha^2}{w^2},\end{eqnarray*}
where $z=\frac{w^2}{4}$, $w=\pm 2\sqrt z$.

Even more frequent is the (standard) {\em Bessel equation} {    given by}:
\begin{eqnarray*}
-z^{\frac{\alpha}{2}}\big(z\p_z^2+(\alpha +1)\p_z-1\big)
z^{-\frac{\alpha}{2}}&=&
\partial_u^2+\frac{1}{u}\partial_u+1-\frac{\alpha^2}{u^2},
\end{eqnarray*}
where $z=-\frac{u^2}{4}$, $u=\pm 2\i\sqrt z$.
Clearly, we can pass from the modified Bessel to the Bessel equation by
  $w=\pm\i u$.

  The
  function {    ${\bf F}_\alpha$} is also seldom used. Instead,
one uses the {\em modified Bessel function} and, even more frequently,
the {\em Bessel function}:
\begin{eqnarray}
I_\alpha(w)&=&\Big(\frac{w}{2}\Big)^\alpha {\bf F}  _\alpha\Big(\frac{w^2}{4}\Big),\\[3mm]
J_\alpha(w)&=&\Big(\frac{w}{2}\Big)^\alpha {\bf F}  _\alpha\Big(-\frac{w^2}{4}\Big).\label{Bessel-F}
\end{eqnarray}
They solve the modified Bessel, resp. the Bessel equation.

 Instead of the $U_\alpha$ function one uses
the {\em Macdonald function}, solving the modified Bessel equation:
\begin{eqnarray}
K_\alpha(w)&=&\frac{\sqrt\pi}{2}\Big(\frac{
  w}{2}\Big)^\alpha U_\alpha\Big(\frac{w^2}{4}\Big),\label{MacDonald}\end{eqnarray}
 and the Hankel functions of the 1st and 2nd kind, solving the Bessel equation:
\begin{eqnarray}
H_\alpha^{(1)}(w)=H_\alpha^+(w)&=&\frac{- \i}{\sqrt\pi}\Big(\frac{\e^{- \i \pi}
  w}{2}\Big)^\alpha U_\alpha\Big(\e^{-\i \pi}\frac{w^2}{4}\Big),\label{Hankel+U}\\ 
H_\alpha^{(2)}(w)=H_\alpha^-(w)&=&\frac{ \i}{\sqrt\pi}\Big(\frac{\e^{ \i \pi}
  w}{2}\Big)^\alpha U_\alpha\Big(\e^{\i \pi}\frac{w^2}{4}\Big).\label{Hankel-U}\end{eqnarray}
  
Here are the relations between various functions from the Bessel family:
\begin{align}
H_\alpha ^{\pm}(z)&=\frac{2}{\pi}\e^{\mp\i\frac\pi2(\alpha+1)}
K_\alpha (\mp 
\i z),\\
H_{-\alpha}^{\pm}(z)&=\e^{\pm \alpha\pi \i}H_{\alpha}^{\pm}(z)
,\\
J_\alpha (z)&=\frac{1}{2}\left(H_\alpha ^{+}(z)+H_\alpha
              ^{-}(z)\right),\label{ours2}\\
I_\alpha(z)&=\frac{1}{\pi}\bigl(
\mp\i K_\alpha(\e^{\mp\i\pi }z)\pm\i\e^{\i\pi m}K_\alpha(z)\bigr).\label{macdo2}
\end{align}


\subsection{${\bf F}_\alpha$ and $U_\alpha$ functions as distributions}\label{specialasdistr}

The function $U_\alpha(z)$ (and many others that we consider in this
paper) are multivalued analytic functions defined on the Riemann
surface of the logarithm. It has its {\em principal branch} on
$\cc\backslash]-\infty,0]$. For its
analytic continuation around $0$ we will often use the
self-explanatory notation $U_\alpha(\e^{\i\phi}z)$, where
$z\in \cc\backslash]-\infty,0]$ and $\phi\in\rr$.

We will often consider $U_\alpha(w)$ on the real line.
For ${w}>0$ this is
unambiguous. For ${w}<0$ one needs to add $\pm\i0$ indicating whether we
are infinitesimally above or below the real line. At $w=0$ this function has a singularity, which may require a more careful treatment in terms of distributions (see Appendix \ref{Distributions} for notation about some common distributions).

Thus we introduce the distribution on the real line
\beq
U_\alpha({w}\pm\i0):
=\lim_{\epsilon\searrow0}U_\alpha({w}\pm\i\epsilon),\label{podp}\eeq 
where the right-hand side should be understood as the limit in the
distributional sense.
Note that for ${w}\neq0$ these distributions are regular (in the sense of Appendix \ref{Distributions}) and given by
analytic functions:
\begin{align}
  U_\alpha({w}\pm\i0)&=U_\alpha({w}),\quad {w}>0;\\
  U_\alpha({w}\pm\i0)&=U_\alpha\big(\e^{\pm\i\pi}(-{w})\big),\quad {w}<0.
\end{align}
At ${w}=0$ these distributions are irregular if $\Re\alpha\geq1$.
We can then write $ U_\alpha(w\pm\i0)$ as the sum of an irregular and regular part as follows:
\begin{align}\label{decompo}
  U_\alpha(w\pm\i0)&=U_\alpha^\sing(w\pm\i0)+U_\alpha^\reg(w),\\
  U_\alpha^\sing(w\pm\i0)&:=\frac{1}{\sqrt{\pi}}\sum_{j=0}^{\lfloor\Re\alpha\rfloor-1}
                           \frac{(-1)^j\Gamma(\alpha-j) }{j!}(w\pm\i0)^{j-\alpha}.\end{align}
                          This easily follows from \eqref{conni}
                           and \eqref{U_integer}.

                           Recall that for $\alpha\not\in\nn$
                           the symbol $w_-^{-\alpha} $ defined in
                           \eqref{irre}  denotes the standard
regularization of 
$|w|^{-\alpha}\theta(-w)$.
The identity \eqref{conne2} for $w\in\rr\backslash\{0\}$ can be
rewritten as
\begin{align}
w_-^{-\alpha}   {\bf F}_{-\alpha}(w)&:=\frac{\i}{ 2\sqrt\pi}\left(
 U _{\alpha}(w+\i0)-
U _{\alpha}(w-\i0)\right).\label{conne2a}
\end{align}
(Note that both sides of \eqref{conne2a} are zero for $w>0$). It is
easy to see that for $\alpha\not\in\nn$ \eqref{conne2a} is a correct
distributional identity, where the lhs is the product of the distribution
$w_-^{-\alpha} $ and of the smooth function $ {\bf F}_{-\alpha}(w)$,
whereas the rhs is a linear combination of distributions defined in
\eqref{podp}. \eqref{conne2a} can be decomposed into a singular and
regular part as follows:
\begin{align}
  w_-^{-\alpha}    {\bf F}_{-\alpha}(w)&
                                                     =
                                          \sum_{j=0}^{\lfloor\Re\alpha\rfloor-1}\frac{w_-^{-\alpha+j} (-1)^j}{\Gamma(-\alpha+j+1)j!}+
                                          \sum_{j=\lfloor\Re\alpha\rfloor}^\infty
                                          \frac{w_-^{-\alpha+j}(-1)^j}{\Gamma(-\alpha+j+1)j!}
  \label{conne2b}
\end{align}
The rhs of \eqref{conne2a} is well-defined also for $\alpha\in\nn$. We
will {\em define} for such $\alpha$ the symbol on the lhs of
\eqref{conne2a} by the rhs. Using
\eqref{defi3} for $\alpha\in\nn$ we can thus write
\begin{align}
w_-^{-\alpha}    {\bf F}_{-\alpha}(w)&
=
(-1)^{\alpha+1} \sum_{j=0}^{\alpha-1}\frac{(-1)^j\delta^{(\alpha-1-j)}(w)}{j!}+
(-1)^\alpha{\bf F}_\alpha(w)\theta(-w)
                                          .\label{conne2c}
\end{align}
(Compare with \eqref{identi0}, where you do not see the
  distributions supported at zero).

Of course, in the context described in  this subsection, the distribution
$U_\alpha(w\pm\i0)$ defined as in \eqref{podp}
can be also expressed in terms of
  $K_\alpha $ and $H_\alpha ^\pm$,
where we would have to treat $\sqrt{w}$, resp. $\sqrt{-w}$ with
$w\in\rr$ as their arguments.
It is then important to indicate precisely how the analytic continuation of
the square root is performed---whether we bypass the branch point at
zero from above or from below, adding $\pm\i0$ to
the variable:
\bes\begin{align}
 K_\alpha \big(\sqrt{w\mp\i0}\big)&:=\begin{cases}K_\alpha \big(\sqrt{w}\big),& w>0,\\
  K_\alpha (\mp\i \sqrt{-w}\big)=\pm\i\frac{\pi}{2}\e^{\pm\i\pi \alpha}H_\alpha ^\pm\big(\sqrt{-w}\big),&
  w<0;
  \end{cases}\label{compu0a.}\\
  H_\alpha ^\pm\big(\sqrt{-w\pm\i0}\big)&:=\begin{cases}
H_\alpha ^\pm\big(\pm\i\sqrt{w}\big)=
\mp\i\frac{2}{\pi}\e^{\mp\i\pi \alpha}
K_\alpha \big(\sqrt{w}\big),& w>0,\\
  H_\alpha ^\pm\big(\sqrt{-w}\big),&
  w<0.
  \end{cases}\label{compu1a.}
\end{align}\label{comm}\ees
We believe, however, that it is more convenient in such situations
to use the function $U_\alpha$.  Indeed, we have

\begin{align}
   U_\alpha \Big(\frac{w\mp\i0}{4}\Big)&=\begin{cases}\frac{2^{\alpha+1}}{\sqrt\pi}(w\mp\i0)^{-\frac\alpha2}
K_\alpha \big(\sqrt{w\mp\i0}\big)\\
\pm\i 2^{\alpha}\sqrt\pi(w\mp\i0)^{-\frac\alpha2}
H_\alpha ^\pm\big(\sqrt{-w\pm\i0}\big).
  \end{cases}\label{comp.}
\end{align}

\section{Euclidean and anti-Euclidean signature}
       \init 
This section is devoted to Bessel potentials on
the Euclidean space $\rr^d$. $|x|:=\sqrt{x^2}$
will denote the Euclidean norm of $x\in \rr^d$.

In this section, we will provide various expressions both in terms of
the Bessel family functions
$I_\alpha,J_\alpha,K_\alpha,H_\alpha^\pm$, as well as in terms of
the hypergeometric functions $F_\alpha,U_\alpha$.

\subsection{General exponents--Euclidean case}

Consider first the Euclidean signature. For $m>0$ and $\Re\mu>0$ the function
$\frac{1}{(p^2+m^2)^{\frac{\mu}{2}}}$ defines a tempered distribution,
hence one can compute its Fourier transform:
\bet Let $m>0$.
\begin{align}\label{euc}
G_{\mu,m}(x)&= \int\frac{\e^{\i px}}{(p^2+m^2)^{\frac{\mu}{2}}}\frac{\d p}{(2\pi)^d}\\&=
\frac{2}{\Gamma(\frac{\mu}{2})(4\pi)^{\frac{d}{2}}}\Big(\frac{|x|}{2m}\Big)^{\frac{\mu-d}{2}}
K_{\frac{d-\mu}{2}}(m|x|)\\&=
\frac{\sqrt{\pi} m^{d-\mu}}{\Gamma(\frac{\mu}{2})(4\pi)^{\frac{d}{2}}}
U_{\frac{d-\mu}{2}}\Big(\frac{m^2x^2}{4}\Big). \label{porr1}\end{align}
\eet

\proof By (\ref{use1}),
\begin{align}
&\frac{1}{(2\pi)^d}\int\frac{\e^{\i px}\d p}{(m^2+p^2)^{\frac{\mu}{2}}}\\
=&\frac{1}{(2\pi)^d\Gamma(\frac{\mu}{2})}\int_0^\infty\d s\int\d p 
    s^{\frac{\mu}{2}-1}\e^{-(m^2+p^2)s}\e^{\i px}\\
  =&\frac{1}{(4\pi)^{\frac{d}{2}}\Gamma(\frac{\mu}{2})}\int_0^\infty\d s 
    s^{\frac{\mu}{2}-\frac{d}{2}-1}\e^{-m^2s-\frac{x^2}{4s}}
\end{align}
Then we use
          (\ref{plo1}).
\qed

Note that the integrand of \eqref{euc}
is integrable
 for $\Re\mu>d$. Therefore, $G_{\mu,m}$ is bounded for
 such $\mu$.
For instance,
\begin{align}
G_{\mu,m}(0)=  \frac{1}{(2\pi)^d}\int\frac{\d p}{ (p^2+m^2)^{\frac{\mu}{2}}}&=
  \frac{m^{d-\mu}\Gamma(\frac{\mu-d}{2})}{(4\pi)^{\frac{d}{2}}\Gamma(\frac{\mu}{2})},\quad \Re\mu>d.\label{porr}\end{align}

  {  
    \subsection{General exponents--massless case}\label{sec_massless}

For 
$0<\Re\mu<d$ the following function is in $L_\loc^1(\rr^d)$ and is bounded
at infinity,  hence
it defines a regular  distribution in $\cS'(\rr^d)$:
\begin{align}
  G_{\mu,0}(x)&:=
\int\frac{\e^{\i px}}{|p|^\mu}\frac{\d p}{(2\pi)^d}\\&=\frac{\Gamma(\frac{d-\mu}{2})}{\Gamma(\frac{\mu}{2})(4\pi)^{\frac{d}{2}}}\Big(\frac{|x|}{2}\Big)^{\mu-d}.\label{porr2}
  \end{align}
It is called the {\em Riesz potential}, and it is the massless limit of
  Bessel potentials:
  \bet Let
  $0<\Re\mu<d$. Then
  \beq 
G_{\mu,0}(x)=\lim_{m\to0}G_{\mu,m}(x)\label{porr2a}
  \eeq
  in the sense of $\cS'(\rr^d)$.
  \eet

  \proof 
  One can prove this fact in the position space,  see
Subsection \ref{General signature-Zero mass}, where we give a proof in
  the case of a general signature. Instead, in this section we describe a proof
  based on the momentum space.

For $0<\mu<d$, $|p|^{-\mu}$ is a regular distribution.
By using the Dominated Convergence Theorem we see that
the pointwise limit
    \begin{align}\label{eqn:mass_limit}
        \lim_{m\to0} (p^2+m^2)^{-\frac{\mu}{2}}= |p|^{-\mu}
    \end{align}
    is a limit in the  sense of $\cS'(\rr^d)$. The Fourier transformation is a
    continuous operator on $\cS'(\rr^d)$. Therefore, for considered $\mu$,
    \eqref{porr2a} is true. \qed

}

\subsection{General exponent--antiEuclidean case}
 Suppose now the scalar product is negative definite. {    For $m^2>0$, the function
                                                                 $\frac{1}{(-p^2+m^2)^{\frac{\mu}{2}}}$
                                                                 does
                                                                 not
                                                                 define
                                                                 uniquely
                                                                 a
                                                                 distribution,
                                                                 therefore
                                                                 one
                                                                 cannot
                compute its Fourier
    transform.  However, if
      $m^2\in\cc\backslash[0,\infty[$, then
      $\frac{1}{(-p^2+m^2)^{\frac{\mu}{2}}}$ is a tempered
      distribution, and one can take its limit from above or below
in the   distributional sense:}
                                                                 \beq
                                                                 \frac{1}{(-p^2+m^2\pm\i0)^{\frac{\mu}{2}}}:=\lim_{\epsilon\searrow0}
                                                                 \frac{1}{(-p^2+m^2\pm\i\epsilon)^{\frac{\mu}{2}}}.\eeq
{    Thus we obtain 
                                           two kinds of Bessel
  potentials in the antiEuclidean case:}
  \bet
        \begin{align}
          G_{\mu,m}^{\F/\bar \F}(x)&=
\int\frac{\e^{\i px}}{(-p^2+m^2\mp\i0)^{\frac{\mu}{2}}}\frac{\d
                                                        p}{(2\pi)^d}\label{porr1+.}
  \\&=
\frac{\mp\i(\pm\i)^{d}\pi}{\Gamma(\frac{\mu}{2})(4\pi)^{\frac{d}{2}}}\Big(\frac{|x|}{2m}\Big)^{\frac{\mu-d}{2}}
                                                                       H^\mp_{\frac{\mu-d}{2}}(m|x|)
                                                                       \\
  &=
\frac{\mp\i\e^{\pm\i\frac{\pi\mu}{2}}\pi}{\Gamma(\frac{\mu}{2})(4\pi)^{\frac{d}{2}}}\Big(\frac{|x|}{2m}\Big)^{\frac{\mu-d}{2}}
                                                                       H^\mp_{\frac{d-\mu}{2}}(m|x|).\label{porr1++}\\
  &=
\frac{\e^{\pm \i\pi\frac d 2}\sqrt\pi m^{d-\mu}}{\Gamma(\frac{\mu}{2})(4\pi)^{\frac{d}{2}}}
                                                                       U_{\frac{d-\mu}{2}}\Big(\frac{\e^{\pm \i\pi}m^2x^2}{4}\Big).
\label{porr--}  \end{align}
\eet

\proof
Using (\ref{use2}) and then
          (\ref{plo1})
          we obtain
\eqref{porr--}. \qed

Note that the Euclidean Bessel potential
$G_{\mu,m}$ is well defined not only for $m\geq0$, but also for 
$\Re(m)>0$, which guarantees $m^2\in\cc\backslash]-\infty,0]$. 
Taking the limit at the imaginary line
we can express the antiEuclidean Bessel potential in terms of the
Euclidean one:
\beq G_{\mu,m}^{ \F/\bar\F}(x)=\e^{\mp\i\pi\frac{\mu}{2}}G_{\mu,\pm\i m}(x).\eeq
                                                                 
\subsection{Green functions of the Helmholtz equation}

Bessel potentials with $\mu=2$ are
Green functions of the Helmholtz equation
\beq (-E-\Delta)f(x)=g(x).\eeq
More precisely, the Green function for $-E=m^2$ is
\begin{align}
G_{m}(x)&:=\int\frac{\e^{\i px}}{(p^2+m^2)}\frac{\d p}{(2\pi)^d}\\&=
\frac{1}{(2\pi)^{\frac{d}{2}}}\Big(\frac{|x|}{m}\Big)^{1-\frac{d}{2}}
K_{\frac{d}{2}-1}(m|x|)\\&=
 \frac{\sqrt \pi m^{d-2}}{(4\pi)^{\frac{d}{2}}}
U_{\frac{d}{2}-1}\Big(\frac{m^2x^2}{4}\Big).
                                                       \label{porr1,,}
\end{align}
and for $-E=-m^2$ we have two distinguished Green 
functions:
\begin{align}
G_{\mp\i m}(x)
&=
\int\frac{\e^{\i xp}}{(p^2-m^2\mp\i 0)}\frac{\d p}{(2\pi)^d} \\& =\pm\frac{\i}{4}\Big(\frac{m}{2\pi |x|}\Big)^{\frac{d}{2}-1}
  H_{\frac{d}{2}-1}^\pm(m |x|)
\\&=-(-\i)^d\frac{\sqrt \pi m^{d-2}}{(4\pi)^{\frac{d}{2}}}
U_{\frac{d}{2}-1}\Big(-\frac{    m^2(x^2\pm\i0)}{4}\Big).
       \label{besse1}  \end{align}
     $G_{\mp\i m}(x)$ coincide with the case $\mu=2$ of
 the anti-Euclidean Bessel potential
     \eqref{porr1+.} multiplied
by $-1$.

\subsection{Averages of plane waves on sphere}

Consider the sphere in $\rr^d$ of radius $m$, denoted
$\sss_m^{d-1}=\sss_m$. Let $\d\Omega_m$ be the natural
measure on $\sss_m$.
As an application of  Bessel potentials, we will compute the Fourier 
transform of the measure on $\sss_m$.
\bet \label{thm:plane_waves}
\begin{align}
  \int_{\sss_m}\e^{\i p
  x}\d\Omega_m(p)=& 2m^{d-1}\pi^{\frac{d}{2}}{\bf F}_{\frac{d}{2}-1}\Big(-\frac{m^2x^2}{4}\Big) \label{aver1}                  \\=& m^{d-1}(2\pi)^{\frac{d}{2}}(m|x|)^{1-\frac{d}{2}}J_{\frac{d}{2}-1}(m|x|)
.              
\end{align}
\eet
\proof
By the Sochocki-Plemejl formula, we have
\begin{equation}
\delta(|p|-m)=2m\delta(p^2-m^2)=\frac{2m}{2\pi\i}\Big(\frac{1}{p^2-m^2-\i0}-\frac{1}{p^2-m^2+\i0}\Big).
\end{equation}
 Therefore,
\begin{align}
  \int_{\sss_m}\e^{\i p
  x}\Omega_m(p)
  =&\int\e^{\i px}\delta(|p|-m)\d p\\
  =&\frac{2m}{2\pi\i}\int\e^{\i
     px}\Big(\frac{1}{p^2-m^2-\i0}-\frac{1}{p^2-m^2+\i0}\Big)     \d
     p\\
  =&\frac{m(2\pi)^d}{\pi\i}\Big(G_{-\i m}(x)-G_{\i m}(x)\Big)\\
  =
     m^{d-1}\pi^{\frac{d-1}{2}}&\Big((-\i)^{d-1} U_{\frac{d}{2}-1}\Big(\frac{\e^{-\i\pi}m^2x^2}{4}\Big)-\i^{d-1}U_{\frac{d}{2}-1}\Big(\frac{\e^{\i\pi}m^2x^2}{4}\Big)\Big)\\
  =&2m^{d-1}\pi^{\frac{d}{2}}{\bf F}_{\frac{d}{2}-1}\Big(-\frac{m^2x^2}{4}\Big),
\end{align}
where at the end we used \eqref{conne1}.
\qed

Consider a radial function $\rr^d\ni p\mapsto f(|p|)$. Its Fourier
transform is also radial. \eqref{aver1} yields the identity
\begin{align}
\int f(|p|)\e^{-\i px}\d p
=&2\pi^{\frac{d}{2}}\int_0^\infty
f(k){\bf F}_{\frac{d}{2}-1}\Big(-\frac{k^2x^2}{4}\Big) k^{d-1}\d
   k\label{aver5}
  \\
=&(2\pi)^{\frac{d}{2}}\int_0^\infty
f(k)J_{\frac{d}{2}-1}(k|x|)(k|x|)^{-\frac{d}{2}+1} k^{d-1}\d k\label{aver2},
\end{align}
where $k=|p|$ has the meaning of the length of $p$.

Using ${\bf F}_{-\frac12}(-z)=\frac{\cos2\sqrt z}{\sqrt\pi}$ and
${\bf F}_{\frac12}(-z)=\frac{\sin2\sqrt z}{\sqrt{\pi z}}$ we obtain
the low dimensional cases of \eqref{aver5}:
\begin{align}
\int f(|p|)\e^{-\i px}\d p
&=2\int_0^\infty f(k)\cos(k|x|)\d k,\ \ \ \ d=1;\\
&\hspace{-16ex}=2\pi\int_0^\infty
f(k){\bf F}_{0}\Big(-\frac{k^2x^2}{4}\Big) \d k=2\pi\int_0^\infty f(k)kJ_0(k|x|)\d k,\ \ \ \ d=2;\\
&=4\pi\int_0^\infty f(k)k^2\frac{\sin(k|x|)}{k|x|}\d k,\ \ \ \ d=3.
\end{align}

\subsection{Integral representations of the $U_\alpha$ function}

As an illustration of the usefulness of \eqref{aver5}, we will derive
a certain integral represention of $U_\alpha$.

Applying \eqref{aver5} to \eqref{porr1} we obtain
\begin{equation}2\int_0^\infty\frac{k^{d-1}\d k}{(k^2+1)^\frac{\mu}{2}}{\bf
    F}_{\frac{d}{2}-1}\Big(-\frac{r^2
    k^2}{4}\Big)=\frac{\sqrt\pi}{\Gamma(\frac{\mu}{2})}U_{\frac{d-\mu}{2}}\Big(\frac{r^2}{4}\Big). \end{equation} 
Specifying $d=1$ and $d=3$ we obtain
\begin{align}
2\int_0^\infty\frac{\cos(kr)}{(k^2+1)^{\frac{\mu}{2}}}\d k
&=
\frac{\sqrt\pi}{\Gamma(\frac{\mu}{2})}U_{\frac{1-\mu}{2}}\Big(\frac{r^2}{4}\Big),
                                             \label{dedu1-}\\
  4\int_0^\infty\frac{k\sin( kr)}{(k^2+1)^\frac{\mu}{2} r}\d k
&=\frac{\sqrt\pi}{\Gamma(\frac{\mu}{2})}U_{\frac{3-\mu}{2}}\Big(\frac{r^2}{4}\Big).
                                                             \label{dedu3-}\end{align}
                                                   \eqref{dedu3-} could
                                                   be also deduced
                                                   from \eqref{dedu1-}
                                                   by differentiating
                                                   wrt $r$ and using
                                                   the recurrence
                                                   relation \eqref{recur3}.
    Setting $\alpha =\frac{\mu-1}{2}$ in \eqref{dedu1-}, we obtain
    the Poisson
    representation of the $U_\alpha$ function:
\begin{align}
 U_\alpha \Big(\frac{r^2}{4}\Big)&
=\frac{\Gamma(\frac12-\alpha)}{  \sqrt\pi}
  \int_{-\infty}^\infty\e^{-\i kr}(k^2+1)^{\alpha-\frac12}\d k,\ \ \ \alpha <0.\label{poiss1}
\end{align}

\section{General signature}
       \init 

\subsection{Positive mass}
       
Consider now a pseudo-Euclidean space of general signature
$\rr^{q,d-q}$.
{   $  \frac{1}{(p^2+m^2)^{\frac{\mu}{2}}}$  no longer defines
  a tempered distribution in the general signature. Just as in the
  antiEuclidean case, there are two natural regularizations of
  this function:}
 \beq                                                            \frac{1}{(p^2+m^2\pm\i0)^{\frac{\mu}{2}}}:=\lim_{\epsilon\searrow0}
                                                                 \frac{1}{(p^2+m^2\pm\i\epsilon)^{\frac{\mu}{2}}}.\eeq
 They lead to two kinds 
of the Bessel potential:

\bet Let $m>0$ (or more generally $\Re(m)>0$).
Then
\begin{align}
  G_{\mu,m}^{\F/\bar\F}(x)=&\int\frac{\e^{\i px}}{(m^2+p^2\mp\i0)^{\frac{\mu}{2}}}\frac{\d p}{(2\pi)^d}\label{compua}\\
  =&
\frac{2(\pm\i)^q}{\Gamma(\frac{\mu}{2})(4\pi)^{\frac{d}{2}}}\Big(\frac{\sqrt{x^2\pm\i0}}{2m}\Big)^{\frac{\mu-d}{2}}
K_{\frac{d-\mu}{2}}\big(\sqrt{m^2(x^2\pm\i0)}\big)\label{compu0a}\\=&
\mp
\frac{\pi\i(\pm\i)^q}{\Gamma(\frac{\mu}{2})(4\pi)^{\frac{d}{2}}}
\Big(\frac{\sqrt{x^2\pm\i0}}{2m}\Big)^{\frac{\mu-d}{2}}
H_{\frac{\mu-d}{2}}^\mp\big(\sqrt{m^2({-x^2\mp\i0})}\big)
\label{compu1a}
\\
 =& \frac{(\pm\i)^q\sqrt \pi m^{d-\mu}}{\Gamma(\frac{\mu}{2})(4\pi)^{\frac{d}{2}}}
U_{\frac{d-\mu}{2}}\Big(\frac{m^2(x^2\pm\i0)}{4}\Big).\label{qre2}
\end{align}
\label{qwq}\eet

\ber In (\ref{compu0a}) and (\ref{compu1a}) we use the notation explained in
(\ref{compu0a.}) and (\ref{compu1a.}).
Note that (\ref{compu0a}) works  best  for $x^2>0$, because then we can ignore $\pm\i0$. Likewise, 
(\ref{compu1a})  is best suited for $x^2<0$, because then we can ignore $\mp\i0$.

Anyway, in our opinion the expression in terms of $U_\alpha$, \eqref{qre2}, is preferable.
\eer

\noindent{\bf Proof of Thm \ref{qwq}.}
Using (\ref{use2}) and (\ref{use4})
we obtain
  \begin{align}\notag&
    \frac{1}{(2\pi)^d}\int\frac{\e^{\i px}\d p}{(m^2+p^2\mp\i0)^{\frac{\mu}{2}}}\\\notag
=&    \frac{
  \e^{\pm\i\frac{\pi\mu}{4}}}{(2\pi)^d\Gamma(\frac{\mu}{2})}\int_0^\infty\d t\int\d p\e^{\mp\i t(m^2+p^2)}t^{\frac{\mu}{2}-1}\e^{\i px}\\
=&   
    \frac{
      (\pm\i)^q \e^{\pm\i\frac\pi2( \frac{\mu-d}{2})}\pi^{\frac{d}{2}}}{(4\pi)^{\frac{d}{2}}\Gamma(\frac{\mu}{2})}
    \int_0^\infty\d t
    \e^{\mp\i(tm^2-\frac{x^2}{4t})}t^{\frac{\mu-d}{2}-1}.\label{propa}
  \end{align}
  Then we apply \eqref{plo11}.
\qed

{  
\subsection{Zero mass}
\label{General signature-Zero mass}

For $0<\Re\mu<d$
let us introduce two distributions in $\cS'$
\begin{align}
  G_{\mu,0}^{\F/\bar\F}(x)&:=
\int\frac{\e^{\i px}}{(p^2\mp\i0)^{\frac{\mu}{2}}}\frac{\d p}{(2\pi)^d}\label{compua1}\\&=
  \frac{(\pm\i)^q\Gamma(\frac{d-\mu}{2})}{\Gamma(\frac{\mu}{2})(4\pi)^{\frac{d}{2}}}\Big(\frac{x^2\pm\i0}{4}\Big)^{\frac{\mu-d}{2}}.\label{porr2gen}
  \end{align}
They will be called {\em Feynman/antiFeynman Riesz potentials}. They
are massless limits of the corresponding Bessel potentials:

\bet
For  $0<\Re\mu<d$ we have
\beq
  G_{\mu,0}^{\F/\bar\F}(x)=\lim_{m\searrow0}
  G_{\mu,m}^{\F/\bar\F}(x)\eeq
  in the sense of $\cS'$. \label{thm:massless_limit}\eet

  \proof
  Surprisingly, a momentum space proof, from the Euclidean case, seems to be difficult to generalize to the
  non-Euclidean case. Instead, we will present a proof in the
  position space.

Using the decomposition \eqref{decompo} of the function
  $U_\alpha$, we can write
  \begin{align}\label{poij1}
    G_{\mu,m}(x)&=
\frac{(\pm\i)^q}{\Gamma(\frac{\mu}{2})(4\pi)^{\frac{d}{2}}}\Bigg(
                  \sum_{j=0}^{\lfloor\Re\frac{d-\mu}{2}\rfloor-1}
                           \frac{(-1)^j m^j\Gamma(\frac{d-\mu}{2}-j) }{j!}\Big(\frac{(x^2\pm\i0)}{4}\Big)^{j-\frac{d-\mu}{2}}
\\\label{poij2}&+
                   m^{d-\mu}
U_{\frac{d-\mu}{2}}^\reg\Big(\frac{m^2(x^2\pm\i0)}{4}\Big)\Bigg).
\end{align}
The line \eqref{poij1} obviously converges to \eqref{porr2gen}. By 
    \eqref{saddle1}, $U_{\frac{d-\mu}{2}}^\reg$ is a continuous
    function of a polynomial growth at infinity. Therefore, the
    second line \eqref{poij2} converges to zero in $\cS'$. \qed

   Note that as a consequence of the above theorem and of the
   continuity of the Fourier transformation on $\cS'(\rr^d)$ we can
   infer that
   \beq
\lim_{m\searrow 0}   \frac{1}{(p^2+m^2\mp\i0)^{\frac{\mu}{2}}}=
      \frac{1}{(p^2\mp\i0)^{\frac{\mu}{2}}}
      \eeq
      in the sense of $\cS'$.
    }
{\color{magenta}
}

\subsection{Scaling degree of distributions}
Let us start by defining the action of a dilation by $\lambda$ on a distribution $T(x)$ as $T_\lambda(x)=T(\lambda x)$, by which we mean the dual action to the dilation on test functions
\begin{align}
    \langle T_\lambda| f\rangle = \int T(\lambda x)f(x) \d x= \lambda^{-d}\int T(x)f(\lambda^{-1}x) \d x.
\end{align}
Given a distribution $T\in \mathcal{D}'(\rr^{d})$, we define its scaling degree $\mathrm{sd}(T)$ as
\begin{align}\mathrm{sd}(T) =\inf \Big\{\omega: \lim_{\lambda\searrow0} \lambda^{\omega}T_\lambda=0\ \text{ in } \mathcal{D}'(\rr^{d})\Big\}. \end{align}
{   The scaling degree of a distribution is often used in mathematical analysis
of Quantum Field Theory \cite{Steinmann, microlocal}.

Let us compute the scaling degree of Bessel potentials.
\bet 
\beq\sd G_{m,\mu}^{\F/\bar\F}=\begin{cases}d-\mu,&0<\mu\leq d;\\
  0,&d\leq\mu.
  \end{cases}.\eeq
  \eet

  \proof
For $0<\mu<d$,  the Riesz potentials $G_{\mu,0}^{\F/\bar\F}$ defined
  in \eqref{porr2gen} are homogeneous:
\begin{align}
G_{\mu,0}^{\F/\bar\F}(\lambda x)=\lambda^{\mu-d} G_{\mu,0}^{\F/\bar\F}( x).
\end{align}
So $\mathrm{sd} G_{\mu,0}^{\F/\bar\F}=d-\mu$. 

By the definition of the Bessel potential, the mass dependence is \eqref{eq:mass_dependence} 
\begin{align}
    G_{\mu,m}(\lambda x)= \lambda ^{\mu-d}G_{\mu,\lambda m}( x), 
\end{align}
so, according to Theorem \ref{thm:massless_limit},
\begin{align}
    \lim_{\lambda\searrow0}   \lambda ^{d-\mu}G_{\mu,m}(\lambda x)= \lim_{\lambda\searrow0}  G_{\mu,\lambda m}( x)= G_{\mu,0}( x),
\end{align}
which shows that $\mathrm{sd} G_{\mu,m}^{\F/\bar\F}=d-\mu$ for any
mass $m$ and $0<\mu<d$.

For $d<\mu$, $G_m^{
  F/\bar\F}$ is a continuous bounded function, so its scaling degree is
$0$.

For $d=\mu$, we have
\beq
G_{d,m}(x) =\frac{(\pm\i)^q\sqrt \pi m^{d-\mu}}{\Gamma(\frac{d}{2})(4\pi)^{\frac{d}{2}}}
U_0\Big(\frac{m^2(x^2\pm\i0)}{4}\Big).\label{qre20}
\eeq
Now, we can use the bound \eqref{saddle1} and the expansion \eqref{U_integer}
\begin{align}
  |U_0(z\pm\i0)|&\leq C|z|^{-\frac14},\quad z\in\rr,\quad|z|>1;\\
  U_0(z\pm\i0)&=\ln(z\pm\i0)\mathbf{F}_0(z)+H(z),
\end{align}
where $H$ is an entire function, just as $\mathbf{F}_0$. Using this we easily show that
for $\omega>0$
\beq\lambda^\omega G_{d,m}(\lambda x)\to 0\eeq
in the sense of $\cS'$. \qed

}

\section{The Minkowski signature}
       \init 

The Lorentzian signature is especially important, both 
because of its physical relevance and rich mathematical properties.
The spaces
$\rr^{1,d-1}$ and $\rr^{d-1,1}$ are two kinds of a
Minkowski space, that is, a pseudo-Euclidean space with a Lorentzian
signature. We will treat $\rr^{1,d-1}$ as the standard form of a Minkowski space.
 $x^0$ will denote the first coordinate of $\rr^{1,d-1}$, which we assume to be timelike (having a negative coefficient in the scalar product). 
The remaining, spacelike coordinates will be denoted $\vec x$,
so that $x=(x^0,\vec x)$. In other words,
\beq x^2=-(x^0)^2+\vec x^2=-(x^0)^2+(x^1)^2+\cdots+(x^{d-1})^2.\eeq

The future and the past light cone will be denoted 
\begin{align*}
  J^\vee&:=\{x\in\rr^{1,d-1}\ :\quad x^2\leq0,\quad x^0\geq0\},\\
  J^\wedge&:=\{x\in\rr^{1,d-1}\ :\quad x^2\leq0,\quad x^0\leq0\}. 
  \end{align*}

In this section, we will only use the hypergeometric functions ${\bf
  F}_\alpha,U_\alpha$.

\subsection{General exponent}
\label{General exponent}

Let $m>0$.
The set $m^2+p^2$ consists of two connected components: the future and
the past mass hyperboloid.
Therefore, the following four regularizations of
$\frac{1}{(m^2+p^2)^{\frac\mu2}}$ are tempered distributions  invariant wrt the orthochronous Lorentz group:
\beq
\frac{1}{(m^2+p^2\pm\i0)^{\frac{\mu}{2}}},\quad
\frac{1}{(m^2+p^2\pm\i0\sgn p^0)^{\frac{\mu}{2}}}.\eeq
Their inverse Fourier transforms define four kinds of Bessel
potentials:
\begin{align}
  G_{\mu,m}^{\F/\bar\F}(x): =  &\int\frac{\e^{\i px}}{(m^2+p^2\mp\i0)^{\frac{\mu}{2}}}\frac{\d p}{(2\pi)^d}\label{compu-beta}\\
G_{\mu,m}^{\vee/\wedge}(x):=   &\int\frac{\e^{\i px}}{(m^2+p^2\mp\i0\sgn 
                             p^0)^{\frac{\mu}{2}}}\frac{\d 
                             p}{(2\pi)^d}.\label{compux1}\end{align}

By the following well-known  argument, found e.g. in various standard
textbooks on quantum field theory, we can show that
                         $G_{\mu,m}^{\vee/\wedge}$ have causal  supports.

\bet $\supp  G_{\mu,m}^{\vee/\wedge}\subset
J^{\vee/\wedge}$. \label{causal}
\eet

\proof
For definiteness, consider (\ref{compux1}) with the minus sign. In
order to prove that its support is contained in $J^\vee$, by the Lorentz invariance it
suffices to prove that it is zero for $x^0<0$.
We write
\begin{align*}
&\int\frac{\e^{\i px}\d p}{(p^2+m^2-\i 0\sgn p^0)^\frac{\mu}{2}}
  =\int\frac{\e^{-\i p^0x^0+\i \vec p\vec x}
\d
p^0\d\vec p
  }{\big(\vec p^2+m^2-(p^0+\i0)^2\big)^\frac{\mu}{2}}.
\end{align*}
Next, we continuously deform the contour of integration, replacing $ p^0+\i 0$  by $p^0+\i R$, where $R\in[0,\infty[$. We do not cross any singularities of the integrand
and note that $\e^{-\i x^0(p^0+\i
  R)}$ goes to zero (remember that   $x^0<0$).
\qed

\bet\label{thm:Feynman}
We have the identity
\begin{align} \label{iyi}
G_{\mu,m}^\F(x)+G_{\mu,m}^{\bar\F}(x)&=G_{\mu,m}^{\vee}(x)+G_{\mu,m}^{\wedge}(x)
\end{align}
Here are the expressions for the Bessel potentials in the position 
space:
\begin{align}
G_{\mu,m}^{\F/\bar\F}(x)=& \frac{\pm\i\sqrt \pi m^{d-\mu}}{\Gamma(\frac{\mu}{2})(4\pi)^{\frac{d}{2}}}
   U_{\frac{d-\mu}{2}}\Big(\frac{m^2(x^2\pm\i0)}{4}\Big). 
                       \label{compu-beta1}\\
                                                                   G_{\mu,m}^{\vee/\wedge}(x)
= &  \theta(\pm x^0) \frac{2 \pi}{\Gamma(\frac{\mu}{2})(4\pi)^{\frac{d}{2}}}\big(\tfrac{x^2}{4}\big)_-^{\frac{\mu-d}{2}}
{\bf F}_{\frac{\mu-d}{2}}\Big(\frac{m^2x^2}{4}\Big).\label{compux5}
\end{align}
where in \eqref{compux5} we used
the notation introduced in \eqref{conne2a}.

{   
Formula \eqref{compux5} involves the multiplication of a distribution by a 
discontinuous function, which in general is not well defined.
  At the end of this subsection we provide explain how this formula
  can be correctly interpreted.}
\eet 

\proof 
The identity  \eqref{iyi} follows immediately from the defining 
formulas, that is from \eqref{compu-beta} and \eqref{compux1}.

\eqref{compu-beta1}
is a special case of \eqref{qre2}. 
Using \eqref{compu-beta1}  and 
\eqref{compux5} we obtain      
                    a simple expression for the sum of
                     two Bessel potentials:
\begin{align} 
                                        G_{\mu,m}^{\vee}(x)+G_{\mu,m}^{\wedge}(x)=&   \frac{-\i\sqrt \pi m^{d-\mu}}{\Gamma(\frac{\mu}{2})(4\pi)^{\frac{d}{2}}}\Bigg(
U_{\frac{d-\mu}{2}}\Big(\frac{m^2x^2-\i0}{4}\Big)-
                                                                     U_{\frac{d-\mu}{2}}\Big(\frac{m^2x^2+\i0}{4}\Big)\Bigg)\\=&
 \frac{2 \pi}{\Gamma(\frac{\mu}{2})(4\pi)^{\frac{d}{2}}}\big(\tfrac{x^2}{4}\big)_-^{\frac{\mu-d}{2}}
{\bf F}_{\frac{\mu-d}{2}}\Big(\frac{m^2x^2}{4}\Big),\label{pqiu}\end{align}
where again we used
the notation introduced in \eqref{conne2a}.
\eqref{pqiu} is clearly supported in $J^\wedge\cup J^\vee$.
By Thm \ref{causal}, we know that $G_{\mu,m}^{\vee/\wedge}$ are supported
in $J^{\vee/\wedge}$.
Thus to find expressions for $    G_{\mu,m}^{\vee/\wedge}$ we need to
``split the distribution'' \eqref{pqiu} into two terms, one supported
in $J^\vee$ and the other in $J^\wedge$.

Using Proposition \ref{legal} to justify the multiplication of a distribution \eqref{pqiu} by the (discontinuous) function  $\theta(\pm\x^0)$, we can define
         \begin{align}
                                                           \tilde        G_{\mu,m}^{\vee/\wedge}(x)&
=   \theta(\pm x^0) \frac{2 \pi}{\Gamma(\frac{\mu}{2})(4\pi)^{\frac{d}{2}}}\big(\tfrac{x^2}{4}\big)_-^{\frac{\mu-d}{2}}
{\bf F}_{\frac{\mu-d}{2}}\Big(\frac{m^2x^2}{4}\Big).\label{compux5.}
\end{align}
Clearly,  $\tilde G_{\mu,m}^{\vee/\wedge}$ are supported
in $J^{\vee/\wedge}$. Besides,
\beq  \label{sum_of_Gs}   G_{\mu,m}^{\vee}(x)+G_{\mu,m}^{\wedge}(x)=  \tilde
G_{\mu,m}^{\vee}(x)+\tilde G_{\mu,m}^{\wedge}(x).\eeq But
$J^\vee\cap J^\wedge=\{0\}$. Therefore, $G_{\mu,m}^{\vee/\wedge}-\tilde
G_{\mu,m}^{\vee/\wedge}$ is a distribution supported in $\{0\}$, that is, a
linear combination of $\delta^{(\alpha)}(x)$ 
\beq
B_{\mu,m}^{\vee/\wedge}:=G_{\mu,m}^{\vee/\wedge}-\tilde G_{\mu,m}^{\vee/\wedge}
=\sum_{|\alpha|<n}c_{\alpha,m}^{\vee/\wedge}\delta^{(\alpha)}(x).\eeq

(\ref{sum_of_Gs}) implies $B_{\mu,m}^{\vee}(x)=-B_{\mu,m}^{\wedge}(x)$. The symmetry in $x\mapsto-x,\vee/\wedge\mapsto\wedge/\vee$ of (\ref{compux1}) and (\ref{compux5}) allows us to write
$G_{\mu,m}^{\vee}(x)=G_{\mu,m}^{\wedge}(-x)$, $\tilde G_{\mu,m}^{\vee}(x)=\tilde G_{\mu,m}^{\wedge}(-x)$, and therefore $B_{\mu,m}^{\vee/\wedge}(x)=B_{\mu,m}^{\wedge/\vee}(-x)=-B_{\mu,m}^{\vee/\wedge}(-x).$ 
Its action on a test function $\phi\in \cS(\rr^{1,d-1})$ is
\beq \langle B_{\mu,m}^{\vee/\wedge},\phi\rangle=\sum_{|\alpha|<n}(-1)^{|\alpha|}c_{\alpha,m}^{\vee/\wedge}(\partial_{x}^\alpha \phi)(0)=\sum_{|\alpha|<n}c_{\alpha,m}^{\vee/\wedge}(\partial_{-x}^\alpha \phi)(0),\eeq
so $c_{\alpha,m}^{\vee/\wedge}=0$ for even $|\alpha|$.
$G_{\mu,m}^{\vee/\wedge}$ and $\tilde G_{\mu,m}^{\vee/\wedge}$ are
invariant with respect to the action of  the {    proper} Lorentz group. The same must apply to their difference $B_{\mu,m}^{\vee/\wedge}$. Derivatives evaluated at $0$ transform as vectors under the action of the Lorentz group. However, $\langle B_{\mu,m}^{\vee/\wedge},\phi\rangle$ is a sum of terms with only odd number of indices so it cannot be invariant under the action of the Lorentz group
unless $B_{\mu,m}^{\vee/\wedge}=0$. We conclude that
$G_{\mu,m}^{\vee/\wedge}=\tilde
G_{\mu,m}^{\vee/\wedge}$.
\qed

 $G_{\mu,m}^\F$ will be  called the {\em Feynman Bessel potential} and
$G_{\mu,m}^{\bar\F}$  the {\em anti-Feynman Bessel potential}. These names
are somewhat artificial in the context of a general $\mu$. Their
justification comes from the case $\mu=2$, where these Bessel
potentials coincide with 
the Feynman and anti-Feynman propagator known from Quantum Field Theory.

  The distribution $G_{\mu,m}^\vee$ will be  called the {\em forward} or
{\em retarded Bessel potential}, and
$G_{\mu,m}^\wedge$  the {\em backward} or {\em advanced
 Bessel potential}.

{  
For $0<\Re\mu<d$ we also have the massless Riesz potentials:
\begin{align} 
  G_{\mu,0}^{\F/\bar\F}(x)=&\frac{\pm\i\Gamma(\frac{d-\mu}{2})}{\Gamma(\frac{\mu}{2})(4\pi)^{\frac{d}{2}}}\Big(\frac{x^2\pm\i0}{4}\Big)^{\frac{\mu-d}{2}},\label{porr2g}\\
                        G_{\mu,0}^{\vee/\wedge}(x)
= &  \theta(\pm x^0) \frac{2 \pi}{\Gamma(\frac{\mu}{2})\Gamma(\frac{\mu-d+2}{2})(4\pi)^{\frac{d}{2}}}\big(\tfrac{x^2}{4}\big)_-^{\frac{\mu-d}{2}}
.\label{compux50}
  \end{align}
}

{   
As we mentioned above, the formula \eqref{compux5} for the advanced and retarded Bessel
potential involves a product of two distributions, and therefore it
needs a justfication.
We will explain two approaches how to interpret this formula.

The first approach is quite elementary. It
 uses the identification $\rr^{1,d-1}\simeq\rr\times\rr^{d-1}$,
with the first variable denoted $x^0$ or $t$. For the remaining
variables $\vec x$ we will later use spherical coordinates $(r,\Omega)$ with
$r=|\vec x|$.}
  For $n,m\in\nn_0$ and $\chi\in \cS(\rr^{1,d-1})$ let us introduce the 
semi-norms, which involve only the variables $\vec x\in\rr^{d-1}$:
\begin{align*}
    \|\chi(t,\cdot)\|_{n,m} =\sup_{\vec{x}\in
  \mathbb{R}^{d-1},|\alpha|=n,|\beta|=m}=|\vec{x}^{\alpha}(\partial_{\vec x}^{\beta}\chi) (t,\vec{x})|.
\end{align*}

\begin{prop}
\label{legal}
  Let $\text{Re}\nu< d$.
  Then there exist $c_k$, $k=0,\dots, \lfloor\frac{\Re\nu}{2}\rfloor$, such that for any $\phi\in\cS(\rr^{1,d-1})$
  \beq \label{tempineq}
  \Big|\int\big(x^2\big)_-^{-\frac{\nu}{2}}\phi(x)\d x\Big|
  \leq\sum_{k=0}^{\lfloor\frac{\Re\nu}{2}\rfloor}\int c_k
  |t|^{d-\Re\nu+k-1}\|\phi(t,\cdot)\|_{0,k}\d t,\eeq
  where the coefficients $ |t|^{d-\Re\nu+k-1}$ are locally integrable
  and polynomially bounded at infinity.
  Therefore, if $f\in L^\infty(\rr)$, then
  $
f(x^0) \big(x^2\big)_-^{-\frac\nu2}$
defines a tempered distribution on $\rr^d$.  
\end{prop}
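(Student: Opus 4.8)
The plan is to reduce the $d$-dimensional pairing to a one-dimensional regularized power distribution by slicing along the timelike coordinate $t=x^0$ and passing to the variable $w=x^2$ inside the light cone. Concretely, for fixed $t\neq0$ I would introduce spherical coordinates $\vec x=r\omega$, $\omega\in S^{d-2}$, and change variables from $r$ to $w=r^2-t^2=x^2$, which sweeps out $w\in[-t^2,\infty)$ with Jacobian $\d\vec x=\tfrac12(w+t^2)^{\frac{d-3}{2}}\d w\,\d\omega$. Since $\big(x^2\big)_-^{-\frac\nu2}$ is by definition the regularization $w_-^{-\frac\nu2}$ of \eqref{irre} built on the quadratic form $w=x^2$, this yields the iterated formula
\begin{equation}
\Big\langle \big(x^2\big)_-^{-\frac\nu2},\phi\Big\rangle
=\int_{\rr}\d t\,\big\langle w_-^{-\frac\nu2},\,g_t\big\rangle_w,\qquad
g_t(w):=\tfrac12(w+t^2)^{\frac{d-3}{2}}\int_{S^{d-2}}\phi\big(t,\sqrt{w+t^2}\,\omega\big)\,\d\omega,
\end{equation}
where the inner bracket is the one-dimensional distribution $w_-^{-\frac\nu2}$ acting in $w$. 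The point of this step is that $g_t$ is smooth in $w$ near the only relevant singular point $w=0$ (the light cone $r=|t|$), with $w+t^2\approx t^2$ bounded away from $0$ there, so the one-dimensional machinery applies; near $w=-t^2$ (the cone tip $r=0$) the function $g_t$ may be mildly singular but remains integrable against the smooth weight $|w|^{-\frac\nu2}$. One must check that analytic continuation in $\nu$ commutes with this smooth change of variables away from $w=0$, which is routine.

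Next I would invoke the standard bound for the regularized power $w_-^{-\frac\nu2}$: writing $M=\lfloor\Re\nu/2\rfloor$, its action on a function $g$ smooth near $0$ is controlled by the Taylor data at $0$ plus an integrable remainder,
\begin{equation}
\big|\langle w_-^{-\frac\nu2},g\rangle\big|\le C\sum_{k=0}^{M}|g^{(k)}(0)|+C\int_{-\infty}^{0}|w|^{-\Re\nu/2}\,\big|R_M g(w)\big|\,\d w,
\end{equation}
where $R_Mg$ is the order-$M$ Taylor remainder of $g$ at $0$; this is immediate from the definition \eqref{irre}, since $|w|^{-\Re\nu/2}|R_Mg(w)|=O(|w|^{M+1-\Re\nu/2})$ is integrable near $0$. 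Applying this with $g=g_t$ and unwinding the chain rule, each $w$-derivative either falls on the smooth prefactor $(w+t^2)^{\frac{d-3}{2}}$, contributing a factor of order $|t|^{-2}$, or on $\phi$ through $\partial_w=\tfrac1{2\sqrt{w+t^2}}\,\omega\cdot\nabla_{\vec x}$, contributing one $\vec x$-derivative and a factor of order $|t|^{-1}$; evaluation at $w=0$ places everything on the cone $r=|t|$, bounded by the seminorms $\|\phi(t,\cdot)\|_{0,j}$. The cleanest way to fix the resulting powers of $t$ is a scaling (homogeneity) count: both sides scale identically under $x\mapsto\lambda x$, which forces a term carrying exactly $j$ transverse derivatives to come with the weight $|t|^{d-\Re\nu+j-1}$ (the extra $|t|^{1-\nu/2}$ beyond the naive count being produced by the $\d w$-integration over the interval of length $\sim|t|$). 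Summing $j=0,\dots,M$ produces the asserted inequality \eqref{tempineq}.

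It remains to read off the qualitative properties of the weights and the distributional conclusion. Near $t=0$ the exponent satisfies $d-\Re\nu+k-1\ge d-\Re\nu-1>-1$ by the hypothesis $\Re\nu<d$, so $|t|^{d-\Re\nu+k-1}$ is locally integrable; away from the origin it is polynomially bounded, and since $\phi\in\cS$ the seminorm $\|\phi(t,\cdot)\|_{0,k}$ decays faster than any power of $t$, so each integral in \eqref{tempineq} converges and is dominated by a finite combination of Schwartz seminorms of $\phi$. Finally, for $f\in L^\infty(\rr)$ I would note that inserting $f(t)$ into the outer integral leaves the $\vec x$-estimates untouched, because $\|f(t)\phi(t,\cdot)\|_{0,k}\le\|f\|_\infty\|\phi(t,\cdot)\|_{0,k}$; hence the iterated formula with $f$ inserted still converges and defines a continuous functional on $\cS(\rr^d)$, that is, a tempered distribution.

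The main obstacle, in my view, is not the convergence but the bookkeeping of the regularization: verifying that the $\rr^d$-distribution $\big(x^2\big)_-^{-\frac\nu2}$, defined by analytic continuation in $\nu$, is genuinely computed by the sliced one-dimensional regularization, and then tracking the powers of $t$ generated by the Taylor subtractions and the chain rule so that precisely the exponent $d-\Re\nu+k-1$ emerges. The homogeneity argument is what keeps this honest and lets one avoid a tedious direct evaluation of all the Leibniz terms.
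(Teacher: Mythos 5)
Your proof follows essentially the same route as the paper's: slice along $t$, Taylor-expand the test function to order $\lfloor\Re\nu/2\rfloor$ at the light cone, pair the finitely many singular Taylor terms with the one-dimensional regularized powers, and bound the integrable remainder, with the correct powers of $|t|$ then read off. The only cosmetic differences are your use of the variable $w=x^2$ in place of the paper's radial variable $r$ (the paper factors $(r^2-t^2)_-^{-\nu/2}$ and substitutes $r=tr'$ to compute the exponent $d-\Re\nu+k-1$ explicitly) and your homogeneity shortcut for fixing those exponents.
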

\begin{proof}
Action of $\big(x^2\big)_-^{-\frac{\nu}{2}} $ on test a function $\phi \in \mathcal{S}(\rr^d)$ is
\begin{align*}
  \int \big(x^2\big)_-^{-\frac{\nu}{2}}\phi(x)\d x
  =\int_{-\infty}^\infty \d t\int_0^{|t|}\d r \int_{\mathbb{S}^{d-2}}\d \Omega 
(r^2-t^2)_-^{-\frac{\nu}{2}}  \phi(t,r,\Omega) r^{d-2}.
\end{align*}
For simplicity let us consider only $t>0$. We can expand $\phi$ around $r=t$
\begin{align*}
    \phi(t,r,\Omega) =
  \sum_{k=0}^{m}\frac{(r-t)^{k}}{k!}\phi^{(k)}(t,t,\Omega)+
  (r-t)^{m+1}\psi(t,r,\Omega).
\end{align*}
with $m=\lfloor\frac{\text{Re}\nu}{2}\rfloor-1$, where $\phi^{(k)}$ denote derivatives with respect to the $r$ variable. 
Note that $|\psi(t,r,\Omega)|\leq (m+1)!\|\phi(t,\cdot)\|_{0,m+1}$. Let
\begin{align*}
 a_{m+1}:
=&\int_0^\infty \d t\int_0^t\d r \int_{\mathbb{S}^{d-2}}\d \Omega 
(t-r)^{\beta}(r+t)^{-\frac{\nu}{2}} \psi(t,r,\Omega)  r^{d-2},
\end{align*}
with  $\beta=\lfloor\frac{\text{Re}\nu}{2}\rfloor-\frac{\nu}{2}, \quad -1<\text{Re}\beta\leq 0$, is the integral of the locally integrable function.
We see that it is well defined and
\begin{align*}
  & | a_{m+1}|\leq   \int_0^\infty \d t\int_0^t\d r \int_{\mathbb{S}^{d-2}}\d \Omega 
(t-r)^{\text{Re}\beta}(t+r)^{-\text{Re}\frac{\nu}{2}} |\psi(t,r,\Omega)|  r^{d-2}\\
&\leq (m+1)!  \int_0^\infty \d t\|\phi(t,\cdot)\|_{0,m+1} t^{\lfloor\frac{\text{Re}\nu}{2}\rfloor+d-\text{Re}\nu-1}\int_0^1\d r' 
(1-r')^{\beta}(1+r')^{-\frac{\text{Re}\nu}{2}}   r'^{d-2}\int_{\mathbb{S}^{d-2}}\d \Omega \\
&=:  C(d,\nu,m+1) \int_0^\infty \d t t^{\lfloor\frac{\nu}{2}\rfloor+d-\text{Re}\nu-1}\|\phi(t,\cdot)\|_{0,m+1}.
\end{align*}

 Next, we look at each term of the expansion of $\phi(t,r,\Omega)$ in $k$
\begin{align*}
   a_{k}=& \int_0^\infty \d t\int_0^t\d r \int_{\mathbb{S}^{d-2}}\d \Omega 
(r^2-t^2)_-^{-\frac{\nu}{2}} \frac{(r-t)^{k}}{k!}\phi^{(k)}(t,t,\Omega) r^{d-2}\\
=\frac{(-1)^k}{k!}&\int_0^\infty \d t\int_0^t\d r 
(r-t)^{-\frac{\nu}{2}+k} _{-}(t+r)^{-\frac{\nu}{2}} r^{d-2}\int_{\mathbb{S}^{d-2}}\d \Omega \phi^{(k)}(t,t,\Omega).\end{align*}
Here, $(t-r)_-^{-\frac{\nu}{2}+k}$ is the (irregular)  distribution,
defined by (\ref{defi}).
It yields a finite expression:
\begin{align*}
     &\int_0^t\d r 
(r-t)_-^{-\frac{\nu}{2}+k}(t+r)^{-\frac{\nu}{2}} r^{d-2}\\
=t^{d-\nu+k-1}&\int_0^1\d r' 
(r'-1)_-^{-\frac{\nu}{2}+k}(1+r')^{-\frac{\nu}{2}} r^{d-2} =:t^{d-\nu+k-1}\Tilde{C}(d,\nu,k).
\end{align*}

Because $d-\text{Re}\nu+k-1\geq d-\text{Re}\nu -2>-1$, dependence on $t$ is locally integrable and bounded by a polynomial. For $k=0,1,\hdots, m+1$ we can write 
\begin{align*}
   |a_{k}|\leq
C(d,\nu,k) \int_0^\infty \d t \ t^{d-\text{Re}\nu+k-1} \|\phi(t,\cdot)\|_{0,k}.
\end{align*}For fixed $d, \nu$, we have the inequality \eqref{tempineq} showing that homogeneous distributions are tempered distribution. \qed
\end{proof}

{    Now we have $d-\mu<d$, and therefore  Proposition \ref{legal}
 shows
that we can multiply the distribution
$\big(\tfrac{x^2}{4}\big)_-^{\frac{\mu-d}{2}}$
by the 
discontinuous but bounded function
$\theta(\pm x^0)$. The resulting distribution is then multiplied by
the smooth function
${\bf F}_{\frac{\mu-d}{2}}\Big(\frac{m^2x^2}{4}\Big)$, obtaining the
right hand side of \eqref{compux5}.}

{   
  An alternative way to define the product in \eqref{compux5} is 
 based on the concept of the {\em wave front set} \cite{Hormander1}.
  Here are the wave front sets of the distributions contained in
 \eqref{compux5}:
  \begin{align*}
    \mathrm{WF}\left(\theta(t)\right)=&\Big\{\left((0, \Vec{x}),(\tau, 0)\right) \ : \ \Vec{x}\in \mathbb{R}^{d-1}, \tau\neq 0\Big\},\\
    \mathrm{WF}\left(\big(x^2\big)_-^{-\frac{\nu}{2}}\right)=  &\Big\{\left((t, \Vec{x}),(-\lambda t, \lambda\Vec{x})\right) \ : \  t^2-\Vec{x}^2= 0,(t, \Vec{x})\neq 0,\lambda\neq 0\Big\}\\ &\cup\Big\{((0, 0),(\tau, \Vec{k})) \ : \  \tau^2-\Vec{k}^2= 0,(\tau, \Vec{k})\neq 0\Big\},
\end{align*}
 where $(\tau, \Vec{k})$ denotes the dual variable to $(t,
 \Vec{x})$. The fiberwise sum of wavefront sets
 $\mathrm{WF}\left(\theta(t)\right)+\mathrm{WF}\left(\left(x^2\right)_-^{-\frac{\nu}{2}}\right)
 $ does not contain an element of the form $( (t,
 \Vec{x}),(0,0))$. Therefore, by Hörmander's criterion
 \cite[p. 267]{Hormander1}, the product of these two distributions is well defined.
}


\subsection{Green functions of the Klein Gordon equation}

Consider the {\em Klein-Gordon equation}
\beq (-E-\Box)f(x)=g(x),\label{epar}\eeq
where $E$ is a parameter, usually real.  We will consider 3 cases: 
\begin{align} \text{massive case:}\ &-E=m^2,\\
  \text{massless case:}\ &-E=0,\\
  \text{tachyonic  case:}\ &-E=-m^2. 
\end{align}
The massive and massless cases are quite similar and they often appear
in physical applications. {    They are often discussed in detail in
  the  literature. The tachyonic case is more exotic and less known, but
also interesting.}

The  Klein-Gordon equation possesses several useful
Green functions, that is, distributions  satisfying
\beq(-E-\Box)G^\bullet(x)=\delta(x).\eeq
One can try to define
Green functions of the Klein-Gordon equation  the Fourier transformation.
Unfortunately, for $E\in\rr$, $\frac{1}{(-E+p^2)}$ is not a well-defined distribution
because of zeros of its denominator. One way to regularize it is to
add $\pm\i0$, which leads to the so-called Feynman and anti-Feynman
Green function:
\beq G_m^{\F/\bar\F}(x)= \int\frac{\e^{\i px}}{(-E+p^2\mp\i0)}\frac{\d p}{(2\pi)^d}.\label{compum}
\eeq

As follows from a general theory of hyperbolic equations,
the Klein-Gordon equation \eqref{epar} possesses also another important pair of
Green functions: the retarded (or forward) Green function $G^\vee$ and
the advanced (backward) Green function $G^\wedge$.
They are uniquely
defined by the conditions
\beq
\supp G^{\vee/\wedge}\subset J^\pm.\eeq

{    Note that the above definition provides $G^{\vee/\wedge}$ for
  all $E\in\rr$. In the case $-E\geq0$, with $-E=m^2$ they coincide
  with $G_m^{\vee/\wedge}$ defined already with the help of Fourier
  transformation. In the tachyonic case they will be denoted $G_{\i
    m}^{\vee/\wedge}= G_{-\i
    m}^{\vee/\wedge}$
and they  need a separate
  discussion, see Subsection
\ref{Tachyonic Klein-Gordon equation}}

We will also consider {    certain} distinguished
solutions of the (homogeneous) Klein-Gordon equation, that is
functions $G^\circ$ satisfying
\beq(-E-\Box)G^\circ(x)=0.\eeq
One can look for them with the ansatz
\beq G^\circ(x)= \int\e^{\i px}g^\circ(p)\delta(-E+p^2)\frac{\d p}{(2\pi)^{d-1}}\label{compuk},
\eeq
where $g^\circ$ is a distribution on $p^2-E=0$.
Above, for $E\in\rr$, we use the notation
\beq
\delta(p^2-E)\d p=
\frac{\delta\Big(p^0-\sqrt{\vec p^2-E}\Big)}{2\sqrt{\vec p^2-E}}\d \vec p+
\frac{\delta\Big(p^0+\sqrt{\vec p^2-E}\Big)}{2\sqrt{\vec p^2-E}}\d \vec p,
\label{mea2}\eeq
where for $\vec p^2<E$ \eqref{mea2}=0.

Below we consider separately the massive, massless and tachyonic cases
of the Klein-Gordon equation. In all three cases, we will be able
to define 
$ G^{\bar\F/\F}$ and  $G^{\vee/\wedge}.$

\subsection{Massive Klein-Gordon equation}

Let us consider $-E=m^2$, that is the massive Klein-Gordon
equation. The corresponding Green functions satisfy
\beq(m^2-\Box)G_m^\bullet(x)=\delta(x).\eeq
Specifying Theorem \ref{thm:Feynman} to  $\mu =2$, we
obtain the following
expressions for the Feynman and anti-Feynman Green functions:
\bet
\begin{align}
  G_m^{\F/\bar\F}(x)
=  &\int\frac{\e^{\i px}}{(m^2+p^2\mp\i0)}\frac{\d p}{(2\pi)^d}\label{compu}\\
=&  \frac{\pm\i\sqrt \pi m^{d-2}}{(4\pi)^{\frac{d}{2}}}U_{\frac{d}{2}-1}\Big(\frac{m^2(x^2\pm\i 0)}{4}\Big)\label{compuF}.
\end{align}
\eet

The retarded and advanced Green functions of the Klein-Gordon equation
are obtained by specifying Theorem \ref{thm:Feynman}  to $\mu=2$.
 In the following theorem, we also identify their regular and singular part.
\bet
\begin{align}\label{qwqw1}
G_m^{\vee/\wedge}(x)=   &\int\frac{\e^{\i px}}{m^2+p^2\mp\i0\sgn p^0}\frac{\d p}{(2\pi)^d}\\
=&
\theta(\pm x^0)\frac{-\i\sqrt \pi m^{d-2}}{(4\pi)^{\frac{d}{2}}}\Bigg(
U_{\frac{d}{2}-1}\Big(\frac{m^2x^2-\i0}{4}\Big)-
U_{\frac{d}{2}-1}\Big(\frac{m^2x^2+\i0}{4}\Big)\Bigg)\label{qwqw}
\\\label{G_KG_insidecone_anyd}
                      &=\theta(\pm x^0)\frac{2 \pi m^{d-2}}{(4\pi)^{\frac{d}{2}}}\big(\tfrac{x^2}{4}\big)_-^{1-\frac{d}{2}}
                        {\bf F}_{1-\frac{d}{2}}\Big(\frac{m^2x^2}{4}\Big).
\end{align}

We can decompose $ G^{\vee/\wedge}$ into a singular and regular part:
\begin{align}    G_m^{\vee/\wedge} (x)&=G_{m,\sing}^{\vee/\wedge}(x)+
                                     G^{\vee/\wedge}_{m,\mathrm{reg}}(x). \end{align}
                                   For $d$ odd this decomposition can be
                                   chosen as
                                   \begin{align}
               \label{qqq1}                       G_{m,\sing}^{\vee/\wedge}(x)&=
 \frac{ \theta(\pm x^0)}{2\pi^{\frac{d}{2}-1}}
                                                                \sum_{j=0}^{\frac{d-5}{2}}\frac{(-1)^j}{j!\Gamma(2-\frac{d}{2}+j) }\Big(\frac{m^2}{4}\Big)^{j} (x^2)_-^{1-\frac{d}{2}+j},\\
                                  \label{qqq2}                                         G_{m,\reg}^{\vee/\wedge}(x)&=
 \frac{ \theta(\pm x^0)}{2\pi^{\frac{d}{2}-1}} \sum_{j=\frac{d-3}{2}}^{\infty}\frac{(-1)^j}{j!\Gamma(2-\frac{d}{2}+j) }\Big(\frac{m^2}{4}\Big)^{j} (-x^2)^{1-\frac{d}{2}+j}\theta(-x^2).
\end{align}
                                   For $d$ even:
                                   \begin{align}\label{G_KG_insidecone}
 G_{m,\sing}^{\vee/\wedge}(x)                         &=\theta(\pm x^0)\frac{ 1}{2\pi^{\frac{d}{2}-1}}\sum_{j=0}^{\frac{d}{2}-2}\frac{(-1)^{j+1}}{(\frac{d}{2}-2-j)!}\biggl(\frac{m^2}{4} \biggr)^{\frac{d}{2}-2-j}\delta^{(j)}(x^2),\\
    G_{m,\reg}^{\vee/\wedge}(x)&=\theta(\pm x^0)\frac{2\pi m^{d-2}}{(4\pi)^{\frac{d}{2}}}{\bf F}_{\frac{d}{2}-1}\Big(\frac{m^2x^2}{4}\Big)\theta(-x^2). \label{G_KG_insidecone-}
\end{align}
\eet
\proof 
The formula for the Green function of the Klein-Gordon equation
is given by equation \eqref{G_KG_insidecone_anyd}, which was computed
earlier for a general $\mu$ (\ref{compux5}). The decomposition into (\ref{qqq1}) and (\ref{qqq2}) is due to (\ref{conne2b}). For even $d$, the decomposition can be rewritten using (\ref{conne2c}). \qed


Introduce
{    the following} distinguished solutions of the Klein--Gordon equation $-\Box+m^2$:
\begin{align}
  G_m^{\mathrm{PJ}}(x) &:= \frac{\i}{(2\pi )^d} \int\e^{\i x\cdot
                       p}\sgn(p^0)\delta(p^2+m^2) \d p  \label{pro3}\\
&=
\frac{1}{(2\pi)^{d-1}}\int\frac{\d\vec{p}}{\sqrt{\vec p^2+m^2}}\e^{\i \vec x\vec p}
\sin\left( x^0\sqrt{\vec 
    p^2+m^2}\right)
  \\[.5em]
  G_m^{(\pm)}(x) &:= \frac{1}{(2\pi )^d} \int\e^{\i x\cdot
                 p}\theta(\pm p^0)\delta(p^2+m^2) \d p                                   \label{pro4}\\
&=
\frac{1}{(2\pi)^{d-1}}\int\frac{\d\vec{p}}{2\sqrt{\vec p^2+m^2}}\e^{\mp\i x^0\sqrt{\vec 
    p^2+m^2} +\i\vec x\vec p}.
\end{align}
Following \cite{DeGa}, we will call distinguished Green functions and solutions jointly
{   \em propagators}.
$G_m^\PJ$ 
is supported in
 $J^\vee\cup J^\wedge$. Here are the expressions for these solutions
 in terms of positions:
\begin{align}
G_m^{\PJ}(x)
 =& \sgn(x^0) \frac{2 \pi}{(4\pi)^{\frac{d}{2}}}\Big(\frac{x^2}{4}\Big)_-^{\frac{2-d}{2}}
{\bf F}_{\frac{2-d}{2}}\Big(\frac{m^2x^2}{4}\Big) 
,\\\label{propag}
G_m^{(\pm)}(x)=&
\frac{\sqrt{\pi}m^{d-\mu}}{(4\pi)^{\frac{d}{2}}}
U_{\frac{d-2}{2}}\Big(\frac{m^2x^2\pm\i\sgn x^0 0}{4}\Big).
\end{align}

Note the identities satisfied by the {    propagators}:
\begin{subequations}\label{iden}\begin{align}
   G_m^\vee - G_m^\wedge&=G_m^\PJ \label{idi1}\\
        &= \i G_m^{(+)} - \i G_m^{(-)}, \label{idi2}\\
  G_m^{\Feyn} - G_m^{\aFeyn} &= \i G_m^{(+)} +\i G_m^{(-)}, \label{idi3}\\
  G_m^{\Feyn} + G_m^{\aFeyn} &= G_m^\vee + G_m^\wedge, \label{idi4}\\
  G_m^\Feyn &= \i G_m^{(+)} + G_m^\wedge = \i G_m^{(-)} + G_m^\vee, \label{idi5}\\
  G_m^{\aFeyn} &= -\i G_m^{(+)} + G_m^\vee = -\i G_m^{(-)} + G_m^\wedge. \label{idi6}
\end{align}\end{subequations}
To prove these identities we use repeatedly
\beq\theta(\pm p^0)
2\pi\i\delta(p^2+m^2)=
\theta(\pm p^0)\Big(
\frac{1}{p^2+m^2-\i0}-
\frac{1}{p^2+m^2+\i0}\Big),
\label{mea1a}\eeq


\subsection{Massless Klein-Gordon equation}

The massless case is quite similar to the massive one: we need only to
set $m=0$ in the previous subsection. In particular, all identities
\eqref{iden} are satisfied. 
There are a few simplifications.
Only the most singular part of the massive propagator remains in the
massless case. {   This is the special case of Riesz
  potentials, massless limit of Bessel potentials, that we studied in the section \ref{sec_massless}}

\bet
\begin{align}
  G_0^{\F/\bar\F}(x)
=&  \pm\frac{\i\Gamma(\frac{d}{2}-1)}{4\pi^{\frac{d}{2}}}\big(x^2\pm\i
   0\big)^{1-\frac{d}{2}},\\\label{psa1}
   G_0^{\vee/\wedge}(x)=&
\theta(\pm x^0)\frac{1}{2\pi^{\frac{d}{2}-1}\Gamma(2-\frac{d}{2}) 
                              }(x^2)_-^{1-\frac{d}{2}},\\
  \label{psa2}
   G_0^\PJ(x)=&
\sgn(x^0)\frac{1}{2\pi^{\frac{d}{2}-1}\Gamma(2-\frac{d}{2}) 
                              }(x^2)_-^{1-\frac{d}{2}},\\
  G^{(\pm)}_0(x)
=&  \frac{\Gamma(\frac{d}{2}-1)}{4\pi^{\frac{d}{2}}}\big(x^2\pm\i
   0\sgn(x^0)\big)^{1-\frac{d}{2}}.
\end{align}  
For $d$ odd \eqref{psa1} and  \eqref{psa2} can be rewritten as
\begin{align}
      G^{\vee/\wedge}_0(x) 
    &=\theta(\pm x^0)\frac{
      (-1)^{\frac{d}{2}-2}}{2\pi^{\frac{d}{2}-1}}\delta^{(\frac{d}{2}-2)}(x^2),\\
  G^\PJ_0(x) 
    &=\sgn(x^0)\frac{
      (-1)^{\frac{d}{2}-2}}{2\pi^{\frac{d}{2}-1}}\delta^{(\frac{d}{2}-2)}(x^2)
\end{align}
\eet

Note that using (\ref{defi}) we can write 
identity (\ref{idi4}) as
\beq
G^{\F}_0(x)+G^{\bar\F}_0(x) = G_0^\vee + G_0^\wedge=\frac{1}{2\pi^{\frac{d}{2}-1}}\rho_{-}^{\frac{d}{2}-1}(x),
\eeq
which agrees with the fact that massless retarded/advanced Green functions, also known as Riesz distributions (see \cite{WaveEquation}), are expressed by homogeneous distributions supported on $J^{\vee/\wedge}.$

\subsection{Tachyonic Klein-Gordon equation}
\label{Tachyonic Klein-Gordon equation}

Let us now consider the {\em tachyonic Klein-Gordon equation}, which
means, with $E=m^2$. Its Green functions satisfy
\beq(-m^2-\Box)G^\bullet(x)=\delta(x).\eeq
{    Usually, tachyonic quantum fields are considered to be
  unphysical \cite{J}. Nevertheless, every now and then there are attempts to
  analyze them in the physics literature, see \cite{Sudarshan}, and more
  recently \cite{Dragan}.


We have a minor notational problem how to indicate that we replaced
$m^2$ with $-m^2$. Naively, one would think it should be indicated by both $+\i m$ and $-\i m$
  instead of $m$. However, this would suggest the analytic continuation
  $\e^{\i\phi},$ $\pm\phi\in[0,\pi]$, which is not always  appropriate.
  This problem appears in the case of the Feynman propagator:  we will 
  write $G_{\i m}^\F$, but not $G_{-\i m}^\F$. Similarly, for the
  anti-Feynman propagator  we will 
  write $G_{-\i m}^{\bar\F}$, but not $G_{\i m}^{\bar\F}$.
In the case of retarded/advanced propagators, this problem will be
absent, since the analytic continuation can be performed in $m^2$:
thus $G_{\i m}^{\wedge/\vee}=G_{-\i m}^{\wedge/\vee}$.

We define} the Feynman and anti-Feynman Green functions 
by adding $\mp\i0$ to the denominator $-m^2+p^2$ in the momentum
representation. 
In the following theorem, we compute their form in position variables:
\bet
\begin{align}
  G_{\i m}^{\F}(x)/G_{-\i m}^{\bar\F}(x)
  &=\int\frac{\e^{\i px}}{(-m^2+p^2\mp\i0)}\frac{\d p}{(2\pi)^d}\label{compu-ta}\\
  =&\frac{\sqrt \pi m^{d-2}(\mp\i)^{d+1}}{(4\pi)^{\frac{d}{2}}}
U_{\frac{d}{2}-1}\Big(\frac{m^2(-x^2\mp\i0)}{4}\Big) \label{compu-taF}.
\end{align}
In particular, for $x^2>0$ we have
\begin{align*}
    G_{\i m}^{\F}(x)/G_{-\i m}^{\bar\F}(x)=\frac{\pm\i\sqrt \pi m^{d-2}\e^{\mp\i\pi(\frac{d}{2}-1)}}{(4\pi)^{\frac{d}{2}}}U_{\frac{d}{2}-1}\Big(\frac{m^2(-x^2\mp \i0)}{4}\Big)
\end{align*}
and for $x^2<0$
\begin{align*}
   G_{\i m}^{\F}(x)/G_{-\i m}^{\bar\F}(x)=\frac{\pm\i\sqrt \pi m^{d-2}\e^{\mp\i\pi(\frac{d}{2}-1)}}{(4\pi)^{\frac{d}{2}}}U_{\frac{d}{2}-1}\Big(\frac{m^2|x^2|}{4}\Big).
\end{align*}
\eet

\proof Let us start from the usual (positive mass) Feynman
propagator, defined in \eqref{compu} and \eqref{compuF}.
Then we continue analytically 
{    $  G_m^{\F}(x)$ and $G_m^{\bar\F}$, replacing $m$ with $m\e^{\i\phi}$, where
$\phi\in[0,\frac\pi2]$ in the former  and $\phi\in[-\frac\pi2,0]$
in the latter case}. (Note that during the analytic continuation
the denominator has to have a constant sign of its imaginary part,
that is, $\pm\Im(m^2\e^{2\i\phi}+\i0)>0$.)
The analytic continuation yields
\begin{align}
  G^{\F}_{\i m}(x)/G^{\bar\F}_{-\i m}(x)
=&\frac{\pm \i\sqrt \pi m^{d-2}\e^{\pm\i\pi(\frac{d}{2}-1)}}{(4\pi)^{\frac{d}{2}}}
   U_{\frac{d}{2}-1}\Big(\frac{\e^{\pm \i \pi}m^2(x^2\pm\i0)}{4}\Big)\\
  =&\frac{\pm \i\sqrt \pi m^{d-2}\e^{\pm\i\pi(\frac{d}{2}-1)}}{(4\pi)^{\frac{d}{2}}}
U_{\frac{d}{2}-1}\Big(\frac{m^2(-x^2\mp\i0)}{4}\Big),
\end{align}
which coincides with  \eqref{compu-taF}. \qed

{    Unfortunately, the tachyonic 
Feynman and antiFeynman propagator do not have the usual 
physical interpretation, as the vacuum expectation value of the time-ordered, 
resp. anti-time-ordered product of fields. In fact, for tachyons the 
vacuum is ill defined. Nevertheless, some authors, e.g.
\cite{Sudarshan}, try to use  the above Feynman propagator
to define interacting tachyonic quantum
field theory.

Retarded and advanced tachyonic Green functions }
  $G_{\i m}^{\vee/\wedge}$  are not 
tempered distributions on $\rr^{1,d-1}$, and therefore they cannot be 
expressed in terms of the 
Fourier transformation in all variables, as in the massive and
massless  cases {    \eqref{qwqw1}.}  However, they are well-defined, and in the following
theorem we give three equivalent formulas for these propagators.

\bet
The forward and backward propagators in the tachyonic case
are given by
\begin{align}
&G_{\i m}^{\vee/\wedge}(x)   =G_{-\i m}^{\vee/\wedge}(x)\notag\\
=&
\theta(- x^2)\theta(\pm x^0)\frac{\sqrt \pi m^{d-2}\i^{d+1}}{(4\pi)^{\frac{d}{2}}}\Bigg(
U_{\frac{d}{2}-1}\Big(\frac{m^2(-x^2)}{4}\Big)-
   U_{\frac{d}{2}-1}\Big(\frac{\e^{\i2\pi}m^2(-x^2)}{4}\Big)\Bigg)\label{suppi2}\\
  =&
\theta(- x^2)\theta(\pm x^0)\frac{\sqrt \pi m^{d-2}(-\i)^{d+1}}{(4\pi)^{\frac{d}{2}}}\Bigg(
U_{\frac{d}{2}-1}\Big(\frac{\e^{-\i2\pi}m^2(-x^2)}{4}\Big)-
U_{\frac{d}{2}-1}\Big(\frac{m^2(-x^2)}{4}\Big)\Bigg)\label{suppi3}\\
=&\theta(\pm x^0)\frac{2\pi}{(4\pi)^{\frac{d}{2}}}\Big(\frac{x^2}{4}\Big)_-^{1-\frac{d}{2}}{\bf F}_{1-\frac{d}{2}}\Big(\frac{m^2|x^2|}{4}\Big),
\end{align}
They are supported in $J^\vee$, resp. $J^\wedge$.
We can decompose $ G_{\i m}^{\vee/\wedge}$ into a singular and regular part:
\begin{align}    G_{\i m}^{\vee/\wedge} (x)&=G_{\i m,\sing}^{\vee/\wedge}(x)+
                                     G^{\vee/\wedge}_{\i m,\text{reg}}(x). \end{align}
                                   For $d$ odd this decomposition is almost the same as \eqref{qqq1}, \eqref{qqq2} but without the factor $(-1)^j$:
                                   \begin{align}
               \label{qqq1-}                       G_{\i m,\sing}^{\vee/\wedge}(x)&=
 \frac{ \theta(\pm x^0)}{2\pi^{\frac{d}{2}-1}}
                                                                \sum_{j=0}^{\frac{d-5}{2}}\frac{1}{j!\Gamma(2-\frac{d}{2}+j) }\Big(\frac{m^2}{4}\Big)^{j} (x^2)_-^{1-\frac{d}{2}+j},\\
                                  \label{qqq2-}
                                     G_{\i m,\reg}^{\vee/\wedge}(x)&=
 \frac{ \theta(\pm x^0)}{2\pi^{\frac{d}{2}-1}} \sum_{j=\frac{d-3}{2}}^{\infty}\frac{1}{j!\Gamma(2-\frac{d}{2}+j) }\Big(\frac{m^2}{4}\Big)^{j} (-x^2)^{1-\frac{d}{2}+j}\theta(-x^2).
\end{align}
                                   For $d$ even the decomposition is
                                   similar as in
                                   \eqref{G_KG_insidecone} and \eqref{G_KG_insidecone-}: 
                                   \begin{align}\label{G_KG_insidecone+}
 G_{\i m,\sing}^{\vee/\wedge}(x)                         &=\theta(\pm x^0)\frac{1}{2\pi^{\frac{d}{2}-1}}\sum_{j=0}^{\frac{d}{2}-2}\frac{1}{(\frac{d}{2}-2-j)!}\biggl(\frac{m^2}{4} \biggr)^{\frac{d}{2}-2-j}\delta^{(j)}(x^2),\\
    G_{\i m,\reg}^{\vee/\wedge}(x)&=\theta(\pm x^0)\frac{2\pi m^{d-2}}{(4\pi)^{\frac{d}{2}}}{\bf F}_{\frac{d}{2}-1}\Big(\frac{m^2|x^2|}{4}\Big)\theta(-x^2). \label{G_KG_insidecone-+}
\end{align}
 \eet

 \proof Our starting point
 is the formula \eqref{qwqw} for
 the forward and
backward propagator $G_m^{\vee/\wedge}(x)$.
They are  analytic in
$m$.
Therefore, we can apply the analytic continuation 
$m\mapsto \e^{\i\frac{\pi}{2}}m$:
\begin{align}
G _{\e^{\i\frac{\pi}{2}}m}^{\vee/\wedge}(x)
=&\theta(\pm x^0)\frac{-\i\sqrt \pi m^{d-2}\e^{\i\pi(\frac{d}{2}-1)}}{(4\pi)^{\frac{d}{2}}}\Bigg(
U_{\frac{d}{2}-1}\Big(\frac{\e^{\i\pi}(m^2x^2-\i0)}{4}\Big)\notag\\&\qquad-
U_{\frac{d}{2}-1}\Big(\frac{\e^{\i\pi}(m^2x^2+\i0)}{4}\Big)\Bigg).
\end{align}
This yields \eqref{suppi2}.
Alternatively, we can apply
 the analytic continuation 
$m\mapsto \e^{-\i\frac{\pi}{2}}m$, which yields \eqref{suppi3}. \qed

{    Let us
compute the sum of the tachyonic Feynman and antiFeynman propagator:
\begin{align}
    G_{\i m}^{\F}(x)+G_{-\i m}^{\bar\F}(x)
&=\begin{cases}\frac{2 \pi m^{d-2}}{(4\pi)^{\frac{d}{2}}}{\bf
                              F}_{\frac{d}{2}-1}\Big(-\frac{m^2x^2}{4}\Big),&\quad x^2>0;
                            \\
                            \frac{4 \pi}{(4\pi)^{\frac{d}{2}}}\Big(\frac{-x^2}{4}\Big)^{1-\frac{d}{2}}
{\bf F}_{1-\frac{d}{2}}\Big(-\frac{m^2x^2}{4}\Big),&\quad x^2<0.
\end{cases}                             \label{suppi1}\end{align}}
Thus $G_{\i m}^{\F}(x)+G_{-\i m}^{\bar \F}(x)$ does not have a causal support, and consequently,
\beq \label{pumn}
G_{\i m}^\F(x)+G_{-\i m}^{\bar\F}(x)\neq G_{\i m}^{\vee}(x)+G_{\i m}^{\wedge}(x).
\eeq 
The equality in \eqref{pumn} holds only for $x^2<0$. 

\bigskip

Note that because of \eqref{pumn} we could not deduce the formulas of the forward and backward 
propagators from the Feynman and anti-Feynman propagators, and we had
to apply a separate argument based on analytic continuation.

In the tachyonic case, we do not have the solutions
$G_{\i m}^{(\pm)}$. However, we can define the Pauli-Jordan propagator
\begin{align}
G_{\i m}^\PJ(x)=G_{-\i m}^\PJ&=
\frac{1}{(2\pi)^{d-1}}\int\d\vec{p}\,\e^{\i \vec x\vec p}
\frac{\sin\left( x^0\sqrt{\vec 
    p^2-m^2}\right)}{\sqrt{\vec p^2-m^2}}
\\
&=\sgn(x^0)\frac{2\pi}{(4\pi)^{\frac{d}{2}}}\Big(\frac{x^2}{4}\Big)_-^{1-\frac{d}{2}}{\bf F}_{1-\frac{d}{2}}\Big(\frac{m^2|x^2|}{4}\Big).
\end{align}
Note that $G_{\i m}^\PJ
$ cannot be written in the form
\eqref{compuk}.

Among the identities \eqref{iden} only \eqref{idi1} is still true.

\subsection{Averages of plane waves on the hyperbolic plane}

The Minkowski space possesses two kinds of hyperboloids. The
two-sheeted hyperboloid consists of two connected components isomorphic
to the hyperbolic space. In this subsection, we compute the Fourier transform of the natural measure on one 
  of the sheets of the two-sheeted hyperboloids, similarly as in the Theorem \ref{thm:plane_waves}. 

Consider the {\em future/past hyperboloid} in the $d$-dimensional Minkowski space, denoted
$\hh_{\pm,m}=\hh_{\pm,m}^{d-1}$, consisting of points $p$ such that
$p^2+m^2=0$ and $\pm p^0>0$. Let $\d\Omega_m$ denote the standard
measure on $\hh_{\pm,m}$. We will see that up to a coefficient
its Fourier transform is
essentially the ``positive frequency solution of the Klein-Gordon equation.''
\bet
\begin{align}
  \int_{\hh_{\pm,m}}\e^{\i px}\d\Omega_m(p)= m^{d-1}\pi^{\frac{d-1}{2}}
U_{\frac{d}{2}-1}\Big(\frac{m^2(x^2\pm\i\sgn x^0 0)}{4}\Big).
\end{align}
\eet
\proof This average, up to a coefficient, coincides with $G_m^{(\pm)}$
defined in \eqref{pro4},
which we have already computed:
\begin{align}
    \int_{\hh_{\pm,m}}\e^{\i px}\d\Omega_m(p) &=2 m\int \e^{\i px}  \theta(\pm p^0)\delta(p^2 + m^2)\d p\\
                         &=
                           (2\pi)^dmG_m^{(\pm)}(x).
\end{align}
Therefore, it is enough to use the formula
\eqref{propag}. \qed

\subsection{Averages of plane waves on the deSitter space}

The one-sheeted hyperboloid in the physics literature is usually
called the {\em deSitter space}. It will be denoted $\dS_{m}=\dS_{m}^{d-1}$.
It consists of points $p$ such that $p^2=m^2$. Let $\d\Omega_m$ denote
the standard measure on $\dS_{m}$. We will compute the Fourier
transform of the measure on $\dS_{m}$. 
\bet
\begin{align}
  &\int_{\dS_{m}}\e^{\i px}\d\Omega_m(p)=\notag\\\label{wrt}
=    &m^{d-1}\pi^{\frac{d-1}{2}}\Bigg(\i^{d} 
U_{\frac{d}{2}-1}\Big(\frac{m^2(-x^2+\i0)}{4}\Big) + (-\i)^{d} 
U_{\frac{d}{2}-1}\Big(\frac{m^2(-x^2-\i0)}{4}\Big) \Bigg)\\
  =&\begin{cases} (-1)^{\frac{d}{2}}2m^{d-1}\pi^{\frac{d-1}{2}}\Bigg( 
U_{\frac{d}{2}-1}\Big(\frac{-m^2x^2\pm \i 0}{4}\Big) \pm
\sqrt{\pi}\i\big(\tfrac{-x^2}{4}\big)_-^{1-\frac{d}{2}} {\bf F}_{1-\frac{d}{2}}\Big(-\frac{m^2x^2}{4}\Big) \Bigg), &\frac{d}{2}\in \nn,\\ 
2\i ^{d-1} m^{d-1}\pi^{\frac{d}{2}} \big(\tfrac{-x^2}{4}\big)_-^{1-\frac{d}{2}}
                        {\bf F}_{1-\frac{d}{2}}\Big(-\frac{m^2x^2}{4}\Big), &\frac{d}{2}\notin\nn.\end{cases}
\end{align}
\eet
\proof
\begin{align} 
    \int_{\dS_{m}}\e^{\i px}\d\Omega_m(p)& =2 m\int \e^{\i px}  \delta(p^2 - m^2)\d p\\
    &=\frac{m}{\pi \i}\int \e^{\i
      px}\Big(\frac{1}{p^2-m^2-\i0}-\frac{1}{p^2-m^2+\i0}\Big)\d p\\
&=  \frac{m(2\pi)^d}{\pi\i}\big(G_{\i m}^\F(x)-G_{-\i m}^{\bar\F}(x)\big).
\end{align}
Then we can use the result for the tachyonic Feynman and anti-Feynman propagator \eqref{compu-taF}.
 \qed

One can see that the singular part is different in even- and odd-dimensional cases.

        \appendix
\section{Appendix}
\subsection{Some identities}
       \init 
The following identities for $A>0$ follow from the 2nd Euler integral:
\begin{align}
  \frac{1}{A^{\frac{\mu}{2}}}&=\frac{1}{\Gamma(\frac{\mu}{2})}\int_0^\infty\e^{-sA}s^{\frac{\mu}{2}-1}\d s,\label{use1}\\
\frac{1}{(A\pm\i0)^{\frac{\mu}{2}}}&=\frac{\e^{\mp\i\frac{\pi\mu}{4}}}{\Gamma(\frac{\mu}{2})}\int_0^\infty\e^{\pm\i tA}t^{\frac{\mu}{2}-1}\d t.\label{use2}
 \end{align}
We will also need the Fourier transform of the Gaussian function on the Euclidean space $\rr^d$, and of the Fresnel function on the
 pseudo-Euclidean space  $\rr^{q,d-q}$ (with $q$ minuses):
\begin{align}
  \int\d p\e^{-sp^2}\e^{\i px}&=
  \Big(\frac{\pi}{s}\Big)^{\frac{d}{2}}\e^{-\frac{x^2}{4s}},\\
    \int\d p\e^{\pm\i tp^2}\e^{\i px}&=
 (\mp \i)^q\Big(\frac{\pi}{t}\Big)^{\frac{d}{2}}\e^{\pm\i\frac{\pi}{4}d}\e^{\mp\i\frac{x^2}{4t}}\label{use4}.
\end{align}

\subsection{Distributions}
\label{Distributions}

In this paper, we often use the language of distributions on $\rr^d$.
We say that a distribution $T$ is {\em regular} if there exists a
locally integrable function $f$ such that for a test function $\Phi$
\beq T(\Phi)=\int f(x)\Phi(x)\d x.\eeq
We will use the integral notation also for irregular distributions,
e.g.
\beq \int\delta^{(n)}(x)\Phi(x)\d x=(-1)^n\Phi^{(n)}(0).\eeq

Let us now consider some special distributions on $\rr$.
For any $\lambda\in\cc$
\[(\pm\i x+0)^\lambda=\e^{\pm\i\lambda\frac{\pi}{2}}(x\mp \i0)^\lambda:=\lim_{\epsilon\searrow0}(\pm\i x+\epsilon)^\lambda.\]
is a tempered distribution. If $\Re\lambda>-1$, then it is regular and
 given by the locally integrable function
\beq\e^{\pm\i\sgn(x)\frac{\pi}{2}\lambda}|x|^\lambda.\eeq

The functions
\beq
x_\pm^\lambda:=|x|^\lambda\theta(\pm x)\label{homog}\eeq
define regular distributions only for $\Re\lambda>-1$. We can extend them
to $\lambda\in\cc$ except for $\lambda=-1,-2,\dots$ by putting
\begin{align}
x_\pm^\lambda&:=\frac{1}{2\i\sin\pi\lambda}
\Big(-\e^{-\i\frac{\pi}{2}\lambda}(\mp\i x+0)^\lambda
+\e^{\i\frac{\pi}{2}\lambda}(\pm\i x+0)^\lambda\Big).
\label{irre}\end{align}
For $\lambda>-1$ \eqref{irre}  are regular and coincide with
$\theta(\pm x)|x|^\lambda$.
We have
\beq
x_\pm^{\lambda+1}=|x|\cdot x_\pm^\lambda.\eeq

Instead of $x_\pm^\lambda$, it is often more convenient to consider
\begin{align}
\rho_\pm^\lambda(x)&:=\frac{x_\pm^\lambda}{\Gamma(\lambda+1)}\label{homog1}
\\
&=\frac{\Gamma(-\lambda)}{2\pi\i}
\Big(\e^{-\i\frac{\pi}{2}\lambda}(\mp\i x+0)^\lambda
-\e^{\i\frac{\pi}{2}\lambda}(\pm\i x+0)^\lambda\Big).
\label{defi}\end{align}
Note that using (\ref{homog1}) and  (\ref{defi}) we have defined $\rho_\pm^\lambda$ for all $\lambda\in\cc$.
We have
 \[\p_x\rho_\pm^\lambda(x)=\pm\rho_\pm^{\lambda-1}(x).\]
 At  integers we have
 \begin{align}  \rho_\pm^n(x)&=\frac{x_\pm^n}{n!},\ \ \ n=0,1,\dots;\\
   \rho_\pm^{-n-1}(x)&=(\pm1)^{n}\delta^{n}(x),\ \ \ n=0,1,\dots.
 \label{defi3}\end{align}
Clearly, for $\Re(\lambda)\leq-1$ the distributions $\rho_\pm^\lambda$
are irregular.

\paragraph{Acknowledgement.} 
The support of the National Science Center of Poland under the 
    grant UMO-2019/35/B/ST1/01651 is acknowledged.
{    We thank the referees for their remarks. In particular, we are
grateful to one of them for drawing our attention to H\"ormander's
criterion for multiplication of distribution; see its application
at the end of Subsection \ref{General exponent}}


\begin{thebibliography}{aaaaa}



\bibitem{Aronszajn1}
N. Aronszajn, K.T. Smith: \emph{Theory of Bessel potentials. I.}, Ann. Inst. Fourier (Grenoble) 11, 385-475 (1961).

\bibitem{Calderon}
A.P. Calderon: \emph{Lebesgue spaces of differentiable functions and distributions}, Proc. Symp. Pure Math. 4, 33-49 (1961).

\bibitem{Stein}
E. Stein: \emph{Singular Integrals and Differentiability Properties of Functions}, Princeton Univ. Press, 1970.

\bibitem{AdamsHedberg}
D.R. Adams, L.I. Hedberg, \emph{Function spaces and potential theory}, Springer-Verlag Berlin, Heidelberg, 1996.

\bibitem{Deny}
J. Deny: \emph{Les potentiels d'energie finie}, Acta Math. 82, 107-183 (1950). 

\bibitem{Steinmann}
O. Steinmann: \emph{Perturbation Expansions in Axiomatic Field Theory}, Lecture Notes in Physics 11, Springer, Berlin Springer Verlag, 1971.

\bibitem{microlocal}
R. Brunetti, K. Fredenhagen: \emph{Microlocal Analysis and Interacting Quantum Field Theories: Renormalization on Physical Backgrounds}, Communications in Mathematical Physics 208, 623-661 (2000).

\bibitem{bjorken}
J.D. Bj\"orken, S.D. Drell: \emph{Relativistic Quantum Fields}, McGraw-Hill, 1965.

\bibitem{bogoliubov}
N.N. Bogoliubov, D.V. Shirkov: \emph{Introduction to the Theory of Quantized Fields}, 3rd ed., John Wiley \& Sons, 1980.

\bibitem{BP}
Y.A. Brychkov, A.P. Prudnikov: \emph{Integralnyje preobrazovania obobshchennych funkcji}, Nauka, 1977 (in Russian).

\bibitem{GR}
I.S. Gradshteyn, I.M. Ryzhik: \emph{Table of Integrals, Series, and Products}, 7th ed., Academic Press, 2007.

\bibitem{GelfandShilov}
I.M. Gelfand, G.E. Shilov: \emph{Generalized Functions, Volume 1: Properties and Operations}, AMS Chelsea Publishing, 1967.


\bibitem{DeGa}
J. Derezi\'nski, C. Ga\ss: \emph{Propagators in curved spacetimes from operator theory}, arXiv:2409.03279.


\bibitem{HormDuist}
J. Duistermaat, L. Hörmander: \emph{Fourier integral operators. II}, Acta Mathematica 128, 183-269 (1972).



\bibitem{Friedlander}
F. G. Friedlander:\emph{The Wave Equation on a Curved Space-Time}, Cambridge University Press, 1975.

\bibitem{WaveEquation}
C. B\"ar, N. Ginoux, F. Pfaeffle: \emph{Wave Equations on Lorentzian Manifolds and Quantization}, European Mathematical Society, 2007.


\bibitem{De1}
J. Derezi\'nski: \emph{Hypergeometric type functions and their symmetries}, Ann. Henri Poincar\'e{} 15, 1569-1653 (2014).

\bibitem{De2}
J. Derezi\'nski: \emph{Group-theoretical origin of symmetries of hypergeometric class equations and functions}, in Complex Differential and Difference Equations, Proceedings of the school and conference held at Bedlewo, Poland, September 2-15, 2018, G. Filipuk, A. Lastra, S. Michalik, Y. Takei, H.  \.Zo{\l}adek (eds.), De Gruyter Proc. Math., Berlin 2020, 3-27.

\bibitem{Mellin-Transform Method}
G. Fikioris: \emph{Mellin-Transform Method for Integral Evaluation}, Springer, Cham, 2007.

\bibitem{Hormander1}
L. H\"ormander: \emph{The Analysis of Linear Partial Differential Operators I: Distribution Theory and Fourier Analysis}, Springer Berlin, Heidelberg, 2003.

{   

\bibitem{J}
  K.  Jod{\l}owski: Covariant quantum field theory of tachyons is
  unphysical, arXiv:2406.14225

\bibitem{Sudarshan}
  J. Dhar and E. C. G. Sudarshan, “Quantum field theory
of interacting tachyons,” Phys. Rev. 174 (1968)
1808–1815.

  \bibitem{Dragan}
  J. Paczos, K. Debski, S. Cedrowski, S. Charzy\'nski,
K. Turzy\'nski, A. Ekert, and A. Dragan, “Covariant
quantum field theory of tachyons,” Phys. Rev. D 110
no. 1, (2024) 015006



}
\end{thebibliography}
\end{document}